\theoremstyle{remark}
\newtheorem{theorem}{Theorem}
\newtheorem{definition}{Definition}
\newtheorem{proposition}{Proposition}
\newtheorem{remark}{Remark}
\theoremstyle{remark}
\begin{document}
\title{On-Time Communications Over Fading Channels}
%
%
%

\author{Yan~Li, Yunquan Dong,~\IEEEmembership{Member,~IEEE},~and Byonghyo~Shim,~\IEEEmembership{Senior Member,~IEEE}
\thanks{Yan Li and Y. Dong are with the School of Electronic and Information Engineering,  Nanjing University of Information Science and Technology, Nanjing 210044, China (e-mail: \{yanli, yunquandong\}@nuist.edu.cn).
        }
\thanks{B. Shim is with the Department of Electrical and Computer Engineering, Seoul National University, Seoul 151-744, Korea (e-mail: bshim@snu.ac.kr).}
       }

\maketitle

\begin{abstract}
We consider the on-time transmissions of a sequence of packets over a fading channel.
    Different from traditional in-time communications, we investigate how many packets can be received $\delta$-on-time, meaning that the packet is received with a deviation no larger than $\delta$ slots.
    In this framework, we first derive the on-time reception rate of the random transmissions over the fading channel when no controlling is used.
To improve the on-time reception rate, we further propose to schedule the transmissions by delaying, dropping, or repeating the packets.
    Specifically, we model the scheduling over the fading channel as a Markov decision process (MDP) and then obtain the optimal scheduling policy using an efficient iterative algorithm.
For a given sequence of packet transmissions, we analyze the on-time reception rate for the random transmissions and the optimal scheduling.
    Our analytical and simulation results show that the on-time reception rate of random transmissions decreases (to zero) with the sequence length.
By using the optimal packet scheduling, the on-time reception rate converges to a much larger constant.
    Moreover, we show that the on-time reception rate increases if the target reception interval and/or the deviation tolerance $\delta$ is increased, or the randomness of the fading channel is reduced.
%
\end{abstract}

\begin{IEEEkeywords}
    On-time communications, information freshness, packet scheduling, on-time reception rate.
\end{IEEEkeywords}

\IEEEpeerreviewmaketitle

\section{Introduction}

\IEEEPARstart{W}{ith} the rapid development of the industrial Internet technology, 5G communications, and Internet-of-Things (IoT) technology, billions or even trillions of smart devices will be connected to the internet to enable  efficient interactions between the physical world and its digital counterpart.
    On the one hand, surge of industrial machine-type communications furthered the possibilities for new applications in various industry areas.
On the other hand, applications like industrial sensing and controlling, remote surgery, and automatic driving, require a very low latency (e.g., end-to-end delay being smaller than 10 ms) and a very small jitter (approximately several milliseconds) \cite{Sisnni-IIoT.2018, Wollschlaeger.M.2017, Ge.Xiaohu.2019, shim.2018}.
    For example,  communications between the sensor, actuators, and controller of an industrial Internet should be completed  on-time with a deterministic delay between 1 and 10 ms \cite{liuyunjie-determine.2019};
the braking/steering commands and advanced driver assistance systems (ADAS) type data need to be delivered to/from the actuators/sensors with a deterministic latency being less than 1 ms \cite{Ge.Xiaohu.2019}.
    Therefore, how to deliver information in-time, or even on-time, has become one of the the biggest challenge of modern wire-line and wireless communications.

Owing to the high reliability of cable (or optical fiber) communications, wire-line networks were the first choice in deterministic-latency information deliveries.
    Based on the widely used Ethernet, IEEE 802.1 working group has developed a series of time sensitive networking (TSN) standards for time sensitive applications \cite{Time-sensitive.networking.2018}.
By scheduling traffics with timed transmission gates, filtering traffics based on priorities, and forwarding traffics with repeating circles, TSN networks can deliver the traffics with deterministic delays.
    Thus, TSN has become the basis of time sensitive applications like industrial automation and automotive driving.
In an in-vehicle TSN network, for example, the communications between the vehicle control unit (VCU)  and the cameras, radars, lidars, and the positioning module, can be guaranteed to be deterministic and timely (less than 1 ms) \cite{Ge.Xiaohu.2019}.
    Furthermore, the inter-vehicle communications can be realized by 5G ultra reliable low latency communication (URLLC), as shown in Fig. \ref{fig:1_tsn_urllc}.
By using techniques such as mini-slot scheduling, multi-access computing, and downlink preemption scheduling, URLLC achieves a round-trip  air-interface delay of 2.7 ms (almost deterministic) \cite{shim.2018}.
    Although the combination of TSN and URLLC can offer a satisfying solution for vehicular communications \cite{JSAC2021}, one important question remains:
Is it possible to replace the wire-lines of TSNs with wireless channels?

\begin{figure}[!t]
  \centering
  \includegraphics[width=3.3in]{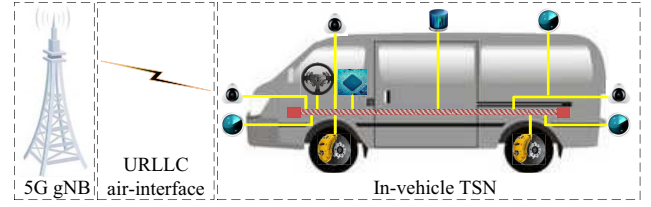}\\
  \caption{An exemplary 5G URLLC and TSN based vehicular network, in which sensors are connected to the VCU through wire-line TSN and the VCU is connected with other vehicles and the internet through 5G URLLC.}
  \label{fig:1_tsn_urllc}
\end{figure}

\subsection{Overview and Main Contributions}

In this paper, we will study how wireless channels can support on-time communications.
    Instead of delivering the packets over fading channels using the best effort principle, we focus more on how many packets can be delivered to the receiver at the predefined epochs (with no delay or ahead of time) or in reception ranges with small deviations (no larger than $\delta$).
Note that in case the packet can be delivered on-time, corresponding delays and system status become more predictable, thereby supporting numerous time-sensitive applications in 5G and 6G.

    First of all, we note that due to the randomness and time-varying property of wireless channels, the transmission delays are also random.
Therefore,  to ensure 100 percent deterministic transmissions directly is not possible for the fading channels.
    For this reason, we propose a metric of on-timeliness called $\delta$-on-time reception rate, which is defined as the proportion of received packets whose deviation is no larger than $\delta$.
Second, we derive the $\delta$-on-time reception rate of the random transmissions over fading channels and show that it goes to zero when a length of the packet sequence goes to infinity.
    Specifically,  we explicitly calculate the probability for each packet to be received $\delta$-on-time.
     We also show that the average number of packets received $\delta$-on-time is equivalent to the sum of these probabilities, with which the $\delta$-on-time reception rate can be obtained.
Third, we propose a scheme improving the on-time reception rate by optimally delaying, dropping, or repeating the packets.
    By modeling the packet scheduling problem as a Markov decision process (MDP), we solve the optimal control problem using a simple iterative algorithm.
We also analyze the $\delta$-on-time reception rates of the random transmissions and the optimally controlled transmissions, from which we validate the effectiveness of the proposed algorithm.
    The contributions of the paper are summarized as follows.
\begin{itemize}
\item We propose a method of evaluating the on-time performance of communications in terms of strictly on-time, $\delta$-on-time, and the on-time reception range.
\item We explicitly derive the on-time reception rate of the random transmissions over fading channels and show that the on-time reception rate decreases monotonically (to zero) with the length of the packet sequence.
\item We improve the on-time reception rate over fading channels by delaying, dropping, and repeating packets.
    We also solve the optimal packet scheduling policy through an MDP formulation.
        We demonstrate from simulation that the obtained scheduling policy matches with the theoretical results with negligible error, which show that by using the optimal packet scheduling, the on-time reception rate converges to a constant and significantly outperforms the random transmission scheme.
\end{itemize}

\subsection{Related Works}
The time-sensitive communications and networks have received much attention in recent years, among which the wired TSN \cite{Time-sensitive.networking.2018} and wireless 5G deterministic network \cite{5g-Dnet-2019} are the most representative works.
     First, TSN follows the standard Ethernet protocol system and reduces physical- and link-layer delays by IEEE 802. 1AS clock synchronization, IEEE 802. 1Qcc flow reservation, and IEEE 802. 1Qch cyclic queuing.
Most academic researches in this area focus on the scheduling of messages.
    For example, a computational efficient solution to the fully deterministic 802.1Qbv scheduler was presented in \cite{S.S.Craciunas.2016};
a bandwidth-efficient TSN scheduler was investigated through a size based queueing method in \cite{K.A.Hares.P.2017};
    an asynchronous traffic scheduling algorithm which achieves both low delay and low implementation complexity was proposed in \cite{J.Specht.2016};
and an online scheduling approach was proposed to deal with the dynamic virtual machine migrations in multicast TSN networks in \cite{online-schdl2020}.
In addition, the authors proposed a simple hardware enhancement of switches to increase the schedulability and throughput of time-triggered traffics in \cite{throughput2020}.
    Routing is also an important part of TSN networks, for which an ILP-based scheduling and degree of conflict aware multipath routings scheme was proposed in \cite{routing2020}  while  a joint routing-scheduling optimization for time-triggered Ethernet networks was investigated in \cite{Schweissguth.E.2017}.
Second, URLLC is one of the three major scenarios of the 5G mobile cellular systems \cite{shim.2018}.
      Since 5G URLLC aims at transmitting packets with ultra low delay (2$\sim$12 ms) and ultra high reliability (99.999\%), it is possible to support some dedicated networks providing predictable and deterministic services, which are referred to as the 5G deterministic networking (5GDN or 5G DeNet) \cite{5g-Dnet-2019, 5gdnwhite2020}.
As was reviewed in \cite{5gdnwhite2020}, 5GDN has great opportunities to converge with applications including real-time monitoring, remote controlling, material management, massive access, and product life-cycle management, to name just a few.

Moreover, the age of information (AoI) theory provides a new theoretic framework of evaluating the timeless of communications \cite{S.Kaul.M.2011}.
     Distinct from the traditional delay measure which only considers the latency to complete the transmission of packets regardless of the packet generation machanism, AoI is defined as the difference between the current epoch and the generation epoch of the latest received packet.
         That is, AoI considers the effects of both the information source and transmission channel.
In doing so, AoI characterizes the freshness of the available packet at the receiver more precisely.
    By modeling the arrivals and transmissions of packets as a queueing system, the AoI of the various systems can be obtained explicitly, such as the M/M/1 queue and the M/D/1 queue, with the first-come-first-service or the last-come-first-service policy, respectively \cite{S.Kaul.M.2011}.
In \cite{KAM.C.2018}, the author explored the impact of service rate on the average AoI of both systems with a fixed deadline and a random deadline.
    In addition to this kind of timeliness characterizations, we can also optimize the packet scheduling based on the AoI theory.
For example, the optimal link scheduling under some throughput and energy constraints was studied in  \cite{KADOTA.I.2019}, \cite{TANG.Haoyue.2019}.
    The   peak AoI and average AoI minimizing scheduling of multi-channel networks was investigated in \cite{SOMBABU.B.2020}.
In \cite{TALAK.R.2020}, \cite{TALAK.R.2017}, the authors constructed a feasible scheduling set by traversal and deduced an average-AoI minimizing scheduling strategy.

\subsection{Organizations}
This rest of the paper is organized as follows. In Section \ref{System Model}, we present the definitions of on-time reception, the channel model, and the source model.
        In Section \ref{Classical Probability Analysis}, we analyze the probability that each packet is received on time and also the on-time reception rate of the random transmission scheme.
In Section \ref{Alignment Transmission}, we present three packet controlling strategies. In Section \ref{Optimal Packet Scheduling}, we formulate an MDP optimization problem to solve the optimal packet scheduling policy.
    For a sequence of packet transmissions, we also derive the corresponding reward in theory in this section.
In Section \ref{Simulation results}, we present the simulation and numerical results on the on-time reception rates  over the fading channel, with both the random transmission scheme and the optimal packet scheduling policy.
    Finally, we conclude the paper in Section \ref{Conclusion}.

\section{System Model}\label{System Model}
\subsection{Definition of On Time}
Different from conventional in-time communications in which the packets are delivered with best-effort and are expected be received before a certain deadline, the on-time receptions (of the transmitted packets) studied in this paper require that each packet should be received exactly at its desired epoch, without any early arrivals or delays.
However, wireless channels are random and time-varying, and thus it is very difficult to guarantee that all of the transmitted packets could be received on-time. In this paper, we shall investigate how fading channels can support the on-time transmissions in terms of on-time reception ratio.

We consider the sequential transmissions of packets over a fading channel.
    We assume that the transmission of a packet starts from the beginning of a slot and is completed at the end of the slot. Due to the fading property of the channel, the \textit{transmission time} (i.e., the number of slots) to successfully deliver a packet is random. Suppose that the packets are intended to be received by the destination node at a sequence of preset slots (i.e., $\{{T}_{\text{tgt}}, 2{T}_{\text{tgt}}, 3{T}_{\text{tgt}}, \cdots \}$) with fixed intervals.
 The on-timeliness of the corresponding transmissions are defined as follows.

\begin{definition}\label{df:strict_ontime}
	The $m$-th packet is said to be received \textit{strictly on-time} if the packet is received by the destination node exactly in the $m{T}_{\text{tgt}}$-th slot.
\end{definition}

  As mentioned, strictly on-time transmission over fading channels is quite difficult so we allow the receptions of packets to deviate from the target slot with a maximum tolerance of $\delta$ slots.
 A slightly relaxed version of the on-timeliness is defined as follows.
\begin{definition}\label{df:delta_ontime}
	The $m$-th packet is said to be received $\delta$-\textit{on-time} if the packet is received by the destination node in any of the slots among $\{m{T}_{\text{tgt}}-\delta, m{T}_{\text{tgt}}-\delta+1, \cdots, m{T}_{\text{tgt}}+\delta\}$ (cf. Fig. \ref{fig:1_ontime_df}).
    Moreover, the period $\{m{T}_{\text{tgt}}-\delta, m{T}_{\text{tgt}}-\delta+1, \cdots, m{T}_{\text{tgt}}+\delta\}$ is referred to as the \textit{target reception range} of the $m$-th packet.
\end{definition}

It is clear that the $\delta$-on-time returns to the strictly on-time if we set the deviation tolerance to be $\delta=0$.

\captionsetup[figure]{labelformat={default},labelsep=period,name={Fig.}}
\begin{figure}[htp!]
  \centering
  \includegraphics[width=3.6in]{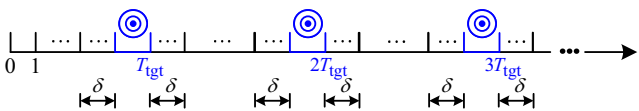}\\
  \caption{The on-time reception model. ${T}_{\text{tgt}}(\ge 1)$ is the preset interval between target reception epochs, $\delta \ge 0$ is the tolerance of deviations, $m{{T}_{\text{tgt}}}$ is the target reception slot of the $m$-th packet.}
  \label{fig:1_ontime_df}
\end{figure}

\subsection{Channel and Source Models}
We consider the packet transmissions over a fading channel with power gain distribution $f_\gamma(x)$.
    We denote the distance between the source and destination nodes as $d$, the path loss exponent as $\alpha$, and the transmit power of the source node as ${{P}_{\text{t}}}$.
In the $n$-th slot, the power of the received signal at the destination node can then be expressed as ${{P}_{\text{t},n}}={{{\gamma }_{n}}{{P}_{\text{t}}}}/{{{d}^{\alpha }}}$, in which ${\gamma }_{n}$ is the random power gain of the channel in the $n$-th slot.
    Thus, the signal-to-noise ratio (SNR) at the destination node can be expressed as
\begin{align}\label{eq:Rayleigh}
{{\rho }_{n}}=\frac{{{\gamma }_{n}}{{P}_{\text{t}}}}{{{d}^{\alpha }}\sigma^{2}},
\end{align}
in which $\sigma^{2}$ is the power of the Gaussian white noises.

We assume that the minimum SNR for the destination node to successfully decode the received packet is ${{V}_{\text{T}}}$.
    That is, the destination can successfully decode the packet from the received signal only if the corresponding SNR ${{\rho }_{n}}$ is larger than ${{V}_{\text{T}}}$.
Otherwise, the packet cannot be decoded and shall be retransmitted in the next slot.
    Thus, the probability that the destination node can decode a packet from the received signal can be expressed as
\begin{align}\label{rt:prb_suc}
    p =\Pr \left\{ {{\rho }_{n}}>{{V}_{\text{T}}} \right\}
    =\int_{\frac{{{V}_{\text{T}}}{{d}^{\alpha }}\sigma^{2}}{{{P}_{\text{t}}}}}^{+\infty }
            {{{f}_{\gamma }}\left( x \right)\text{d}x}.
\end{align}
It can be seen that the transmission time $S$ to successfully deliver a packet over the fading channel follows the geometric distribution with parameter $p$:
\begin{align}\label{eq:service time}
\Pr \left\{ S=j \right\}=p{{\left( 1-p \right)}^{j-1}},~~j=1,2,\ldots.
\end{align}

On the packet generations, we assume that the $(m+1)$-st packet will be generated immediately after the transmission completion of the $m$-th packet.
    After the generation of the packet, the source node begins to transmit the packet immediately.
    In particular, irrespective of the packet generation time, the desired reception time of the $(m+1)$-st packet is $(m+1)T_{\text{tgt}}$.

\subsection{On-Time Reception Rate}
The primary concern of this work is how many packets can be received on time, i.e., within their respective target reception ranges.
    In the transmission of a total number of $M$ packets, we {denote the number of packets received with} $\delta$-on-time as $\kappa_M$.
Then the \textit{on-time reception rate} ${{\varrho}_{M}}$ can be defined as
\begin{equation}\label{df:on_recpt_ratio}
    \varrho_M=\frac{\kappa_M}{M}.
\end{equation}

We would like to mention that the on-time reception rate is closely related to the length of the packet sequence $M$.
    Specifically, the larger $M$ is, the smaller the on-time reception rate would probably be.
This is because when more packets are transmitted, there would be more unexpectedly large transmission times, which makes the following packets more difficult to be received on-time.
    As will be shown in Section \ref{Simulation results}, the on-time reception rate of the random transmission scheme decreases moderately with $M$, even when $M$ is very large. In this paper, we will maximize the on-time reception rate of the system by scheduling the transmissions of packets.

\section{On-Time Reception Rate of Random Transmissions}\label{Classical Probability Analysis}
In this section, we consider the on-time performance of the transmissions over the fading channel in the absence of scheduling and controlling.
    By deriving the probability that each packet is received $\delta$-on time, we can also obtain the average number of packets received with $\delta$-on time and the corresponding on-time reception rate.

We denote the total number of packets to be transmitted as $M$, the transmission time of the $m$-th packet as ${{\tau}_{m}}$, and the probability that the $m$-th packet is received $\delta$-on time as $P\left( {{x}_{m}} \right)$ for $m=1,2,\ldots M$.

First, we consider the probability for the first ($m=1$) packet to be received $\delta$-on time and have
\begin{align}\label{eq_1}
P\left( {{x}_{1}} \right)=\Pr \left\{ {T}_{\text{tgt}}-\delta \le {{\tau}_{1}}\le {T}_{\text{tgt}}+\delta  \right\}.
\end{align}

Note that its transmission time satisfies ${{\tau}_{m}}\ge 1$  and follows the geometric distribution with parameter $p$ (see \eqref{eq:service time}).
    In case ${T}_{\text{tgt}}\le 1+\delta$, we have ${T}_{\text{tgt}}-\delta \le 1$ and \eqref{eq_1} is equivalent to
\begin{align}\label{eq:px1_1}
  P\left( {{x}_{1}} \right)& =\Pr \left\{ 1\leq {{\tau}_{1}}\le {T}_{\text{tgt}}+\delta  \right\} \nonumber \\
  & =\Pr \left\{ {{\tau}_{1}}=1 \right\}+\ldots +\Pr \left\{ {{\tau}_{1}}={T}_{\text{tgt}}+\delta  \right\} \nonumber \\
  & =p+\ldots +p{{\left( 1-p \right)}^{{T}_{\text{tgt}}+\delta -1}} \nonumber \\
  & =1-{{\left( 1-p \right)}^{{T}_{\text{tgt}}+\delta }}.
\end{align}

In case ${T}_{\text{tgt}}> 1+\delta$, we have
\begin{align}\label{eq:px1_2}
 P\left( {{x}_{1}} \right)& =\Pr \left\{ {T}_{\text{tgt}}-\delta \le {{\tau}_{1}}\le {T}_{\text{tgt}}+\delta  \right\} \nonumber \\
 & =\Pr \left\{ {{\tau}_{1}}\le {T}_{\text{tgt}}+\delta  \right\}-\Pr \left\{ {{\tau}_{1}}\le {T}_{\text{tgt}}-\delta -1 \right\} \nonumber \\
 & ={{\left( 1-p \right)}^{{T}_{\text{tgt}}-\delta -1}}-{{\left( 1-p \right)}^{{T}_{\text{tgt}}+\delta }}.
\end{align}

By combining \eqref{eq:px1_1} and \eqref{eq:px1_2},  the probability that the first packet is received $\delta$-on time can be expressed as
\begin{align}
P\left( {{x}_{1}} \right)=\left\{
 \begin{aligned}
  & 1-{{\left( 1-p \right)}^{{T}_{\text{tgt}}+\delta }}, &&{T}_{\text{tgt}}\le 1+\delta   \\
  & {{(1-p)}^{{T}_{\text{tgt}}-\delta -1}}-{{\left( 1-p \right)}^{{T}_{\text{tgt}}+\delta }}, &&{T}_{\text{tgt}}>1+\delta.   \\
\end{aligned} \right.
\end{align}

For the $m$-th packet, which is intended to be received within $\{m{T}_{\text{tgt}}-\delta, m{T}_{\text{tgt}}-\delta+1, \cdots, m{T}_{\text{tgt}}+\delta\}$, the total transmission time $\sum_{k=1}^m{\tau}_k$ follows the negative binomial distribution with parameter $p$.
    The following proposition describes the probability of a packet being received $\delta$-on time.

\begin{proposition}\label{prop:pxm}
For a sequence of $M$ packet transmissions over the fading channel, the probability of the $m$-th packet being received $\delta$-on time is given by
\begin{align}
& P\left( {{x}_{m}} \right)=\left\{ \begin{aligned}
  & \sum\limits_{k=m}^{m{{T}_{\text{tgt}}}+\delta }{C_{k-1}^{m-1}{{p}^{m}}{{\left( 1-p \right)}^{k-m}}},\quad m{{T}_{\text{tgt}}}\le m+\delta  \\
  & \sum\limits_{k=m{{T}_{\text{tgt}}}-\delta }^{m{{T}_{\text{tgt}}}+\delta }{C_{k-1}^{m-1}{{p}^{m}}{{\left( 1-p \right)}^{k-m}}}, m{{T}_{\text{tgt}}}>m+\delta,  \\
\end{aligned} \right.
\end{align}
in which $p$ is the probability of successful reception in a slot, ${T}_{\text{tgt}}$ is the target reception interval, $\delta$ is the deviation tolerance, and $C_n^k = {n\choose k} $ is the combination operator.
\end{proposition}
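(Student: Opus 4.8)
The plan is to reduce the event that the $m$-th packet is received $\delta$-on time to a statement about a single random variable---the cumulative transmission time---and then to identify the distribution of that variable as negative binomial.

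First I would observe that, because the $(m+1)$-st packet is generated and begins transmission immediately after the $m$-th packet completes, the reception slot of the $m$-th packet is exactly the cumulative transmission time $Y_m = \sum_{k=1}^m \tau_k$. Each $\tau_k$ is geometric with parameter $p$ on $\{1,2,\ldots\}$ by \eqref{eq:service time}, and the $\tau_k$ are independent, so $Y_m$ is a sum of $m$ i.i.d. geometric variables. The event $\{Y_m = k\}$ is precisely the event that the $m$-th successful slot occurs on the $k$-th slot: this requires exactly $m-1$ successes among the first $k-1$ slots, which can happen in $C_{k-1}^{m-1}$ ways each of probability $p^{m-1}(1-p)^{k-m}$, followed by a success on slot $k$ with probability $p$. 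Hence
\begin{align}
\Pr\{Y_m = k\} = C_{k-1}^{m-1}\, p^m (1-p)^{k-m}, \quad k \ge m,
\end{align}
which is the negative binomial PMF already noted before the statement.

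Next I would translate Definition \ref{df:delta_ontime} into the inequality $mT_{\text{tgt}} - \delta \le Y_m \le mT_{\text{tgt}} + \delta$, so that
\begin{align}
P(x_m) = \sum_{k=mT_{\text{tgt}}-\delta}^{mT_{\text{tgt}}+\delta} \Pr\{Y_m = k\}.
\end{align}
The only delicate point is the support constraint $Y_m \ge m$: if the nominal lower limit $mT_{\text{tgt}} - \delta$ falls below $m$, the corresponding terms vanish and the sum effectively begins at $k=m$. Splitting on whether $mT_{\text{tgt}} \le m + \delta$ (lower limit at or below $m$) or $mT_{\text{tgt}} > m + \delta$ (lower limit strictly above $m$) then yields the two branches of the claimed formula. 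This is exactly the generalization of the $m=1$ computation in \eqref{eq:px1_1} and \eqref{eq:px1_2}, where the split was between $T_{\text{tgt}} \le 1+\delta$ and $T_{\text{tgt}} > 1+\delta$.

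The main obstacle is not any hard calculation but the careful matching of the lower summation limit against the support of $Y_m$; once the negative binomial identification is made, the remainder is bookkeeping. I would also verify that in the first case the upper limit $mT_{\text{tgt}} + \delta \ge m$ always holds, which it does since $T_{\text{tgt}} \ge 1$, so that the displayed sum is non-empty and well defined.
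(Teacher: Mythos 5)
Your proposal is correct and follows essentially the same route as the paper's proof: both identify the reception slot of the $m$-th packet with the cumulative transmission time $\sum_{k=1}^{m}\tau_k$, invoke its negative binomial distribution, and split into the cases $mT_{\text{tgt}}\le m+\delta$ and $mT_{\text{tgt}}>m+\delta$ according to whether the nominal lower summation limit falls below the support minimum $m$. The only additions beyond the paper's argument are your explicit derivation of the negative binomial PMF and the check that the sum is non-empty, both of which are harmless refinements rather than a different approach.
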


\begin{proof}
See Appendix \ref{proof of corollary 1}.
\end{proof}

\begin{figure}[htp!]
  \centering
  \includegraphics[width=3.7in]{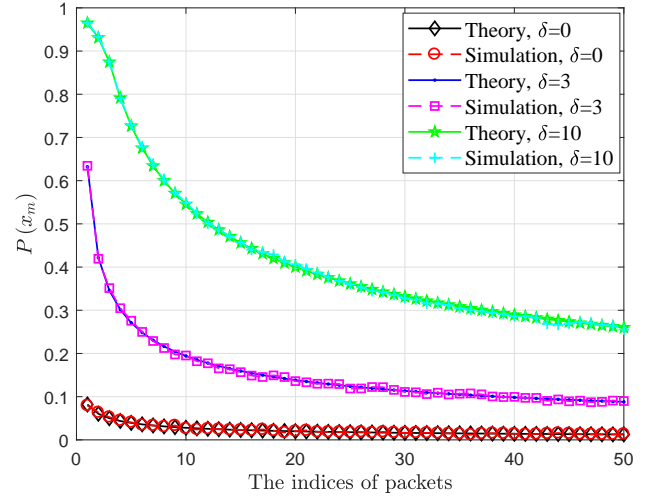}\\
  \caption{Probability being received $\delta$-on time. ($p=0.2$ and ${T}_{\text{tgt}}=5$).}
  \label{prop:pxmacket probability}
\end{figure}

From Proposition 1, we can observe the behavior of $P\left( {{x}_{m}} \right)$.
    In Fig. \ref{prop:pxmacket probability}, we compare $P\left( {{x}_{m}} \right)$  obtained by analytical and simulation results, in which $p=0.2$ and ${T}_{\text{tgt}}=5$.
    We observe that  $P\left( {{x}_{m}} \right)$ decreases with packet index $m$.
That is, the $P(x_m)$ of the $m$-th packet is no larger than that of previous packets.
    It is also seen that $P\left( {{x}_{m}} \right)$ increases with the deviation tolerance $\delta$.

We denote a subset of $k$ (not necessarily successive) packets out of $M$ as $x_{k}^{M} $. We denote the probability that $k$ out of the $M$ packets are received $\delta$-on time as $P\left( x_{k}^{M} \right)$.
    Among the $M$ packets, therefore, the statistical average number of packets received $\delta$-on time would be
\begin{align}\label{df:kappa}
    \kappa_M=\sum\limits_{k=1}^{M}{kP\left( x_{k}^{M} \right)}.
\end{align}

Moreover, as shown in the following theorem, $\kappa_M$ can be further expressed in terms of $P\left( x_{k} \right)$.

\begin{theorem}\label{classical}
For the transmissions of a sequence of $M$ packets over the fading channel, the probability that packets are received $\delta$-on time satisfies
\begin{align}\label{eq:expectation}
\sum\limits_{k=1}^{M}{kP\left( x_{k}^{M} \right)}=\sum\limits_{k=1}^{M}{P\left( {{x}_{k}} \right)},
\end{align}
 where $P\left( {{x}_{k}} \right)$ is the probability for the $k$-th packet to be received $\delta$-on time (cf. Proposition \ref{prop:pxm}) and $P\left( x_{k}^{M} \right)$ is the probability for $k$ out of the $M$ packets being received $\delta$-on time.
\end{theorem}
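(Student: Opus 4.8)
The plan is to recognize the left-hand side of \eqref{eq:expectation} as the expected value of the random number of $\delta$-on-time packets and to evaluate it by linearity of expectation, sidestepping any attempt to compute the joint law $P\left( x_k^M \right)$ directly. The latter is the tempting but intractable route: because the packets share a single cumulative transmission timeline $\sum_{k}\tau_k$, the on-time events of different packets are strongly correlated, so the distribution of \emph{which} packets land on time has no simple closed form. The key observation is that we never need it.

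First I would introduce, for each $m \in \{1,\ldots,M\}$, the indicator random variable
\begin{align}
  X_m = \begin{cases} 1, & \text{if the } m\text{-th packet is received } \delta\text{-on time},\\ 0, & \text{otherwise}, \end{cases}
\end{align}
so that $\mathbb{E}[X_m] = \Pr\{m\text{-th packet is } \delta\text{-on time}\} = P(x_m)$, the quantity already evaluated in Proposition \ref{prop:pxm}. Writing $N = \sum_{m=1}^M X_m$ for the total number of on-time packets, we have $N \in \{0,1,\ldots,M\}$ and $P\left( x_k^M \right) = \Pr\{N = k\}$ by definition, so the left-hand side of \eqref{eq:expectation} is exactly $\mathbb{E}[N] = \sum_{k=1}^M k\,\Pr\{N = k\}$. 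The central step is then linearity of expectation, $\mathbb{E}[N] = \sum_{m=1}^M \mathbb{E}[X_m]$, which holds \emph{without} any independence assumption and hence applies despite the coupling noted above; substituting $\mathbb{E}[X_m] = P(x_m)$ produces the right-hand side and closes the argument.

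If a self-contained combinatorial derivation is preferred over the probabilistic shorthand, I would instead double-count over the atomic events. Letting $P_S$ be the probability that exactly the packets indexed by $S \subseteq \{1,\ldots,M\}$ are on time, one has $P\left( x_k^M \right) = \sum_{|S|=k} P_S$ and $P(x_m) = \sum_{S \ni m} P_S$, whence
\begin{align}
  \sum_{k=1}^M k\, P\left( x_k^M \right) = \sum_{S} |S|\, P_S = \sum_{m=1}^M \sum_{S \ni m} P_S = \sum_{m=1}^M P(x_m).
\end{align}
The only delicate point --- the ``main obstacle,'' such as it is --- lies in verifying these two bookkeeping identities for $P\left( x_k^M \right)$ and $P(x_m)$ and then justifying the interchange of the order of summation; once that is in place the result is immediate, and it underscores that the theorem is a clean instance of linearity of expectation rather than a statement requiring the explicit joint distribution.
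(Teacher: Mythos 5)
Your proof is correct, but it takes a genuinely different route from the paper. The paper proves the identity by induction on $M$: it verifies $M=1$, assumes the claim for $M=n$, and then expands $\sum_{k=1}^{n+1} k P\left( x_k^{n+1} \right)$ into a sum of joint probabilities of on-time/not-on-time patterns such as $P\left( \overline{x_1} x_2 \cdots x_{n+1} \right)$, which it regroups term by term so that the inductive hypothesis applies. Your argument instead introduces indicators $X_m$ with $N=\sum_{m=1}^{M} X_m$, identifies the left-hand side as $\mathbb{E}[N]$, and invokes linearity of expectation, $\mathbb{E}[N]=\sum_{m=1}^{M}\mathbb{E}[X_m]=\sum_{m=1}^{M}P\left( x_m \right)$, which requires no independence and hence is unaffected by the coupling of the packets through the common timeline $\sum_k \tau_k$. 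The two approaches buy different things: yours is shorter, transparent, and makes the mechanism explicit (the theorem is nothing but the definition of expectation plus linearity); indeed your subset double-counting identity $\sum_{S}|S|P_S=\sum_{m}\sum_{S\ni m}P_S$ is precisely the rigorous form of the regrouping that the paper performs implicitly in its step $(b)$, which it justifies only by example. The paper's induction, by contrast, stays entirely within elementary manipulations of joint probabilities and never needs the language of random variables, but at the cost of heavy bookkeeping whose crucial rearrangement is the very interchange of summation you isolate and justify. Either proof is complete; yours is arguably the cleaner and more general one.
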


\begin{proof}
    See Appendix \ref{Proof of Theorem 1}.
\end{proof}

From Theorem \ref{classical}, we have
\begin{align}\label{eq:Classical probability total expected reward}
    \kappa_M=\sum\limits_{k=1}^{M}{kP\left( x_{k}^{M} \right)}=\sum\limits_{k=1}^{M}{P\left( {{x}_{k}} \right)}.
\end{align}
That is, the average number of packets that can be received $\delta$-on time is equal to the sum probability for each of the packets to be received $\delta$-on time.
    Since the probability ${P\left( {{x}_{k}} \right)}$ for each of the packets being received $\delta$-on-time has been given in Proposition \ref{prop:pxm}, we can calculate the on-time reception rate ${{\varrho }_{M}}={{{\kappa }_{M}}}/{M}$ by combing the results in Proposition \ref{prop:pxm}, Theorem \ref{classical}, and equation \eqref{eq:Classical probability total expected reward}.


\section{On-Time Reception Rate of Controlled Transmissions}\label{Alignment Transmission}
In this section, we present three controlling strategies to improve the on-time reception rate of the system.
    We first discuss the low on-time reception behavior of the random transmissions and then discuss a scheme to improve the on-time reception rate using delaying, dropping, and repeating the packets.

\subsection{Drawbacks of Random Transmissions}
In the random transmission scheme, any packet that is not received $\delta$-on time not only degrades system performance, but also  affects the transmissions of the subsequent packets.
    For example, if a packet is received before its target reception range, the probability that the next packet is received $\delta$-on time will also be reduced since there is more than enough time for its transmission so that the packet might be received earlier than the desired time.
If the transmission time of a packet is large so the packet is received after the target reception range, the probability for the subsequent packet being received $\delta$-on time may also be reduced (even to zero), since the remaining time for its transmission is shortened.
    Therefore, the on-time reception rate of the random transmissions is often relatively small.

\subsection{Controlling with Delaying, Dropping or Repeating}\label{subsec:ctl_policy}
We propose a scheme to control the transmission of  packets and improve the on-time reception rate of the system by delaying dropping, or repeating the transmission of packets.

\subsubsection{Delaying}
At the beginning of each packet transmission, the delaying strategy would delay the transmission for a period of ${{n}_{\text{d}}}\left( {{n}_{\text{d}}}=0,1,2,\ldots  \right)$ slots.
    This strategy is especially useful when the previous packet is received before its target reception range.
In particular, when the delayed time ${{n}_{\text{d}}}$ is set to $0$, the delaying strategy is equivalent to the random transmissions.

\subsubsection{Dropping}
Under the dropping strategy, the packets can be dropped on demand so that the next packet could be transmitted immediately.
    This strategy is very useful if the transmission time of the previous packet is so large that the subsequent packet completely misses the chance to be received $\delta$-on time.

\subsubsection{Repeating}
In case a packet is received before its target reception range, the repeat strategy allows the packet to be retransmitted. This is especially useful in the case the destination node is periodically awaken only for a short period.
We limit the retransmissions to a finite number of times and denote the \textit{maximum number of allowed retransmissions} as $n_\text{r}\left( n_\text{r}\ge0\right)$.
    In case $n_\text{r}=0$, the repeating strategy is equivalent to random transmissions.

In the following, we investigate the performance of the repeating strategy.
    Specifically, we consider a single packet ($M=1$) transmission over the fading channel.
 We denote the target reception interval as ${T}_{\text{tgt}}$, the deviation tolerance as $\delta$ and the maximum number of retransmissions as $n_\text{r}$.
    It should be noted that once the packet is received within its target reception range (i.e., $\delta$-on-time) or after the target reception range (i.e., have missed its chance), the transmission stops immediately and no more retransmission is needed.
 Under the repeating strategy, therefore, the transmission time ${S }_{n_\text{r}}$ of a packet can be expressed as
    \begin{equation}
        {S}_{n_\text{r}}=\sum\limits_{i=0}^{l}{{s}_{\text{r},i}},
    \end{equation}
    in which ${s}_{\text{r},i}$ is the transmission time of the $i$-th retransmission and follows the geometric distribution (see \eqref{eq:service time}), and $l$ is the random variable taking values among $\{0,1,2,\cdots, n_\text{r}\}$.
        It is clear that the packet could be received $\delta$-on-time if and only if $|{S }_{n_\text{r}}-{T}_{\text{tgt}}|\le\delta$.
  On the distribution of ${S }_{n_\text{r}}$, we further have the following result.


\begin{proposition}\label{Probability distribution function of repetition strategy}
Considering the transmission of a single packet ($M=1$) with the maximum number $n_\text{r}\ge 0 $ of retransmissions, the probability distribution function of the transmission time ${S }_{n_\text{r}}$ is given by
\begin{align}\label{corollary2.1}
   &\Pr \left\{{S }_{n_\text{r}}-{T}_{\text{tgt}}>j \right\}\nonumber \\
   &=\left\{ {
   \begin{aligned}
      &1,&& j<n_\text{r}-{T}_{\text{tgt}} \\
      &\sum\limits_{m=0}^{n_\text{r}}C_{{T}_{\text{tgt}}-\delta -1}^{m}{{p}^{m}}{{\left( 1-p \right)}^{j+{T}_{\text{tgt}}-m}},&& j\ge -1-\delta  \\
      &\sum\limits_{m=0}^{n_\text{r}}C_{j+{T}_{\text{tgt}}}^{m}{{p}^{m}}{{\left( 1-p \right)}^{j+{T}_{\text{tgt}}-m}},&& \text{else},  \\
  \end{aligned}} \right.
\end{align}
 if ${T}_{\text{tgt}}\ge 1+n_\text{r}+\delta$.
In the case  ${T}_{\text{tgt}} \le 1+\delta$, we have
\begin{align}\label{corollary2.2}
  \Pr \left\{ {S }_{n_\text{r}}-{T}_{\text{tgt}}>j \right\}& =\Pr \left\{ {S }_{0}-{T}_{\text{tgt}}>j \right\}\nonumber \\
  & =\left\{
  \begin{aligned}
  & {{\left( 1-p \right)}^{j+{T}_{\text{tgt}}}},&& j\ge -{T}_{\text{tgt}} \\
  & 1,&& j<-{T}_{\text{tgt}}. \\
\end{aligned} \right.
\end{align}
In case $1+\delta< {T}_{\text{tgt}}<1+n_\text{r}+\delta$, we have
\begin{align}\label{corollary2.3}
  \Pr \left\{ {S }_{n_\text{r}}-{T}_{\text{tgt}}>j \right\}& =\Pr \left\{ {{S }_{{{T}_{\text{tgt}}}-1-\delta }}-{T}_{\text{tgt}}>j \right\}\nonumber \\
  &=\left\{
  \begin{aligned}
  & {{\left( 1-p \right)}^{j+1+\delta }},&& j\ge -1-\delta  \\
  & 1,&& j<-1-\delta.  \\
  \end{aligned} \right.
\end{align}
\end{proposition}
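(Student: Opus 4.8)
The plan is to recast the repeating process in terms of the underlying Bernoulli($p$) decoding process, in which each slot produces a success with probability $p$. Writing $A_k = \sum_{i=0}^{k}s_{\text{r},i}$ for the slot index of the $(k{+}1)$-st success, the epochs $A_0<A_1<\cdots$ are exactly the success instants and each $A_k$ is negative-binomial, $\Pr\{A_k=n\}=C_{n-1}^{k}p^{k+1}(1-p)^{n-k-1}$ for $n\ge k+1$. By Definition \ref{df:delta_ontime} and the stopping rule stated before the proposition, the controller retransmits after the $(k{+}1)$-st success only while $A_k<{T}_{\text{tgt}}-\delta$ and $k<n_\text{r}$. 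Consequently the transmission halts at whichever of two stopping successes comes first: $\tau_1$, the first success landing at slot $\ge {T}_{\text{tgt}}-\delta$, and $A_{n_\text{r}}$, the $(n_\text{r}{+}1)$-st success. Hence $S_{n_\text{r}}=\min(\tau_1,A_{n_\text{r}})$, and the whole proposition reduces to evaluating the tail $\Pr\{S_{n_\text{r}}>n\}$ with $n=j+{T}_{\text{tgt}}$, using $\{S_{n_\text{r}}>n\}=\{\tau_1>n\}\cap\{A_{n_\text{r}}>n\}$.

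First I would treat the main regime ${T}_{\text{tgt}}\ge 1+n_\text{r}+\delta$, splitting on the position of $n$ relative to the boundary ${T}_{\text{tgt}}-\delta-1$. When $n\ge {T}_{\text{tgt}}-\delta-1$ (i.e.\ $j\ge-1-\delta$), $\{\tau_1>n\}$ is ``no success in $[{T}_{\text{tgt}}-\delta,\,n]$'' and $\{A_{n_\text{r}}>n\}$ is ``at most $n_\text{r}$ successes in $[1,n]$''; intersecting, the no-success condition lets me replace the count over $[1,n]$ by the count over $[1,\,{T}_{\text{tgt}}-\delta-1]$, so the event factors over disjoint slot ranges and the probabilities multiply. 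This yields $(1-p)^{\,n-{T}_{\text{tgt}}+\delta+1}\sum_{m=0}^{n_\text{r}}C_{{T}_{\text{tgt}}-\delta-1}^{m}p^{m}(1-p)^{{T}_{\text{tgt}}-\delta-1-m}$, whose exponents collapse to $\sum_{m=0}^{n_\text{r}}C_{{T}_{\text{tgt}}-\delta-1}^{m}p^{m}(1-p)^{j+{T}_{\text{tgt}}-m}$, the second branch of \eqref{corollary2.1}. When $n<{T}_{\text{tgt}}-\delta-1$, every success at slot $\le n$ lies below the target range, so $\{\tau_1>n\}$ holds automatically and only $\{A_{n_\text{r}}>n\}$ survives, giving $\sum_{m=0}^{n_\text{r}}C_{j+{T}_{\text{tgt}}}^{m}p^{m}(1-p)^{j+{T}_{\text{tgt}}-m}$, the third branch; the first branch ($j<n_\text{r}-{T}_{\text{tgt}}$, i.e.\ $n\le n_\text{r}$) is merely the subcase of this sum in which $C_n^m=0$ for $m>n$ forces the value $1$.

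It then remains to dispose of the two boundary regimes by reduction to cases already handled. If ${T}_{\text{tgt}}\le 1+\delta$, then ${T}_{\text{tgt}}-\delta\le 1$, so the very first success (always at slot $\ge 1$) already reaches the target range; no retransmission ever fires, $S_{n_\text{r}}=s_{\text{r},0}$ is geometric, and $\Pr\{S_{n_\text{r}}>n\}=(1-p)^{n}$ for $n\ge0$ and $1$ otherwise, which is \eqref{corollary2.2}. If $1+\delta<{T}_{\text{tgt}}<1+n_\text{r}+\delta$, I would show the retransmission budget is never binding: even in the extremal case $s_{\text{r},i}=1$ for all $i$ one has $A_k=k+1$, so a stopping success $A_k\ge {T}_{\text{tgt}}-\delta$ is forced no later than index $k={T}_{\text{tgt}}-\delta-1\le n_\text{r}-1<n_\text{r}$. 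Therefore $S_{n_\text{r}}$ is identically distributed to $S_{{T}_{\text{tgt}}-1-\delta}$, which is exactly the equality boundary ${T}_{\text{tgt}}=1+n_\text{r}+\delta$ of the first regime and reproduces \eqref{corollary2.3}.

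I expect the main obstacle to be bookkeeping rather than anything conceptual. The independence factorization in the second branch is only valid after using the no-success event to convert the count over $[1,n]$ into a count over $[1,{T}_{\text{tgt}}-\delta-1]$, and one must pin down precisely which slot interval carries which constraint. Likewise, the three branches hinge on handling the off-by-one boundaries at ${T}_{\text{tgt}}-\delta$ and ${T}_{\text{tgt}}-\delta-1$ consistently across strict and non-strict inequalities, and I would verify agreement of the two nondegenerate branches at $n={T}_{\text{tgt}}-\delta-1$ as a consistency check. The one step demanding real care is confirming that the budget-never-binding reduction for the last regime lands exactly on the equality case of \eqref{corollary2.1}.
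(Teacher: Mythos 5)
Your proposal is correct, but it takes a genuinely different route from the paper's proof. The paper works directly with the joint law of the retransmission durations $s_{\text{r},i}$: it expands $\Pr\{S_{n_\text{r}}-T_{\text{tgt}}>j\}$ by conditioning on which (re)transmission the process stops at (cf.\ \eqref{Proof method and formula of probability distribution}), evaluates the resulting nested sums by means of a separately proved combinatorial identity \eqref{eq:Combination number} (itself established by induction), and thereby obtains a recursion of the form $\Pr\{S_{n_\text{r}}-T_{\text{tgt}}>j\}=C_{\cdot}^{n_\text{r}}p^{n_\text{r}}(1-p)^{\cdot}+\Pr\{S_{n_\text{r}-1}-T_{\text{tgt}}>j\}$ that telescopes into the stated sums; the middle regime $1+\delta<T_{\text{tgt}}<1+n_\text{r}+\delta$ is then reduced to the boundary case through a chain of event manipulations \eqref{eq:appendix_kualan_3}. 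You instead embed the scheme into the underlying Bernoulli$(p)$ slot process, identify $S_{n_\text{r}}=\min(\tau_1,A_{n_\text{r}})$, and convert the tail into a statement about success counts on disjoint slot intervals: $\{A_{n_\text{r}}>n\}$ is ``at most $n_\text{r}$ successes in $[1,n]$,'' and after the no-success condition on $[T_{\text{tgt}}-\delta,n]$ is used to shift the count to $[1,T_{\text{tgt}}-\delta-1]$, independence across disjoint slots yields the product form at once; the negative-binomial/binomial-count duality thus replaces both the combinatorial lemma and the telescoping recursion, and your ``budget never binds'' observation ($A_k\ge k+1$ forces stopping by index $T_{\text{tgt}}-\delta-1<n_\text{r}$) settles the middle regime in two lines instead of the paper's event algebra. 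What the paper's route buys is self-containment at the level of the variables $s_{\text{r},i}$ actually generated by the protocol; what yours buys is brevity and transparency, since the three branches of \eqref{corollary2.1} correspond visibly to which of the two stopping constraints is active. Two points should be made explicit in a full write-up: (i) the coupling itself, i.e.\ that the sequence $s_{\text{r},0},s_{\text{r},1},\ldots$ may be taken i.i.d.\ geometric for all indices, with $S_{n_\text{r}}$ a stopping rule on it, so that events such as $\{A_{n_\text{r}}>n\}$ are meaningful even when the actual process halts earlier; and (ii) the one-line binomial-theorem collapse $\sum_{m=0}^{T_{\text{tgt}}-1-\delta}C_{T_{\text{tgt}}-\delta-1}^{m}p^{m}(1-p)^{j+T_{\text{tgt}}-m}=(1-p)^{j+1+\delta}$ that lands your reduction exactly on \eqref{corollary2.3}.
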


\begin{proof}
    See Appendix \ref{proof of corollary 2}.
\end{proof}

\begin{figure}
  \centering
  \includegraphics[width=3.7in]{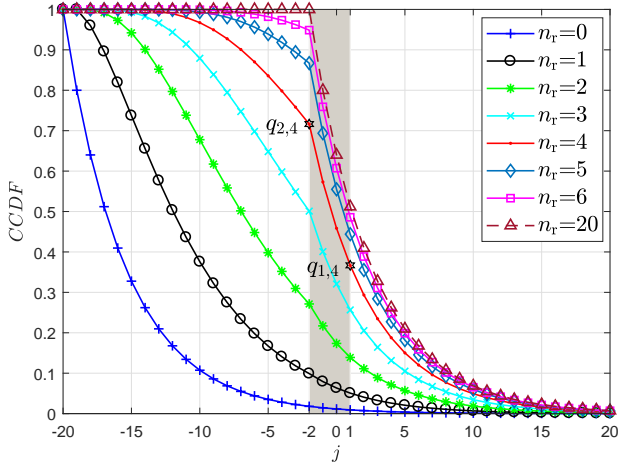}\\
  \caption{Complementary cumulative distribution functions (CCDF) of ${{S}_{n_\text{r}}}-{T}_{\text{tgt}}$  ($p=0.2$, ${{T}_{\text{tgt}}}=20$ and $\delta=1$).}
  \label{PDF}
\end{figure}

We see from Proposition \ref{Probability distribution function of repetition strategy} that the repeating strategy changes the distribution of the transmission time of each packet and improves the on-time performance of the transmissions.
    In fact, the probability that a packet is received $\delta$-on-time increases substantially by repeating the transmission of the packet for some times.
In Fig. \ref{PDF}, we present the complementary cumulative distribution functions (CCDF) of ${{S}_{n_\text{r}}}-{T}_{\text{tgt}}$ for various $n_\text{r}$ (the maximum numbers of retransmissions), in which the probability of successful transmission is set to $p=0.2$, the target reception interval is set to ${T}_{\text{tgt}}=20$, and the deviation tolerance is set to $\delta=1$.
    We plot the target reception range, i.e., the area in interval $(-2,1]$, by the shaded area.
We denote the y-coordinates of the intersections of each CCDF curve and the shaded area as $q_{1,n_\text{r}}$ and $q_{2,n_\text{r}}$.
    It is clear that $q_{2,n_\text{r}}-q_{1,n_\text{r}}=\Pr\{{{S}_{n_\text{r}}}-{T}_{\text{tgt}}>-2\}-\Pr\{{{S}_{n_\text{r}}}-{T}_{\text{tgt}}>1\}$ is the probability that the packet is received $\delta$-on-time, when a maximum $n_\text{r}$ retransmissions are allowed.
From Fig. \ref{PDF}, we observe that as $n_\text{r}$ increases, the probability for the packet to be received $\delta$-on-time will also increase, while the probabilities of the packets being received before (i.e., $1-q_{2,n_\text{r}}$) and after (i.e.,  $q_{1,n_\text{r}}$) the target reception range are, respectively, decreases and increases.
    However, it is clear that the uncertainty in the reception time is unavoidable due to the randomness of the fading channel.
For example, we observe that in case $n_\text{r}={T}_{\text{tgt}}=20$, the probability that a packet is received after the right boundary of the target reception range is still quite large.
    Thus, the gain in the probability of $\delta$-on-time by repeating the packets is also limited.

\section{Optimal Packet Scheduling}\label{Optimal Packet Scheduling}
In practical transmissions of a sequence of packets, a packet may either be received before or after its desired target reception range.
    To increase the probability of being received  $\delta$-on-time for the following packets, controlling strategies such as the delaying, dropping, and repeating should be considered.

In this section, we shall maximize the on-time reception rate of the system by modeling the optimal packet scheduling problem as an MDP problem.
    Using the MDP formulation, we can determine the optimal controlling strategy of a packet in an online manner.
That is, we shall determine the controlling strategy of a packet based on the  state of the system at the starting time of its transmission.
    In particular, we solve the optimal scheduling policy using the MDP-based iterative algorithm.

    To be specific, the state set, available actions, transition probabilities, reward functions, and optimal packet scheduling policy of the MDP problem are elaborated in the following subsections 1 to 5, respectively.

\setcounter{subsubsection}{0}
\subsubsection{States}
Since the controlling strategies are selected in the beginning of packet transmissions, we only need to consider the states of the system when a packet starts its transmission.

    To be specific, we define the \textit{state} $s_m$ of the system as the difference between the transmission starting time and the target reception time of the packet.
    For example, suppose the  $(m-1)$-st packet is received in slot $n-1$ and thus the $m$-th packet has a chance to be transmitted from slot $n$.
In slot $n$, the state of the system would then be $s_m=m{T}_{\text{tgt}}-n+1$,  in which $m{T}_{\text{tgt}}$ is the time when the packet is expected to be received.
    Since the previous packet $m-1$ may be received even later than the target reception time $m{T}_{\text{tgt}}$ of the $m$-th packet, the state $s_m$ could also be negative.
The state set, therefore, would be $\mathcal{S}=\mathbb{Z}$, i.e., the integer set.

Suppose the current state is $s_m=i$, the transmission time of the packet is $S$, and the next state $s_{m+1}=j$,  we further have
\begin{align}\label{eq:transfer}
j=i-S+{T}_{\text{tgt}},\quad \quad \quad i,j\in \mathcal{S}.
\end{align}

\subsubsection{Actions}
In the beginning of the transmission of a packet $m$, we schedule the packet by either delaying it by $n_{\text{d}}\ge0$ slots, dropping it, or repeating the transmission by $n_\text{r}\ge0$ times, which are referred to as taking an action $a$.
    The set of all possible actions is referred to as the action set $\mathcal{A}$.

\subsubsection{Transition probabilities}
For a given state $s_m=i$ and the corresponding action $a\in \mathcal{A}$, we denote the transition probability to state $s_{m+1}=j$ as ${{p}_{ij}}\left( a \right)=\Pr\{s_{m+1}=j|s_m=i\}$ and have
\begin{align}\label{eq:1}
\sum\limits_{j\in \mathcal{S}}{{{p}_{ij}}\left( a \right)}=1.
\end{align}

\begin{figure}[!t]
  \centering
  \includegraphics[width=3.6in]{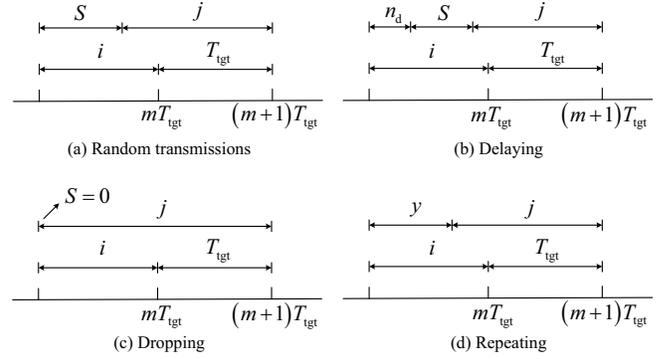}\\
  \caption{Illustration of controlling strategies, in which $y=i+{T}_{\text{tgt}}-j$.}
  \label{fig:Fig2_stratiges}
\end{figure}

By taking different actions, the transition probabilities are also different.
    First, we consider the transition probability of random transmissions.
By combing \eqref{eq:service time} and \eqref{eq:transfer}, and the fact $S\ge 1$, we have (cf. Fig. \ref{fig:Fig2_stratiges}(a))
\begin{align}\label{eq:Transition probability without control strategy}
{{p}_{ij}}=\left\{
\begin{aligned}
  & 0, &&   j\ge i+{T}_{\text{tgt}} \\
  & p{{\left( 1-p \right)}^{i-j+{T}_{\text{tgt}}-1}},&&j<i+{T}_{\text{tgt}}. \\
\end{aligned} \right.
\end{align}

Second, we consider the transitions of delaying the transmission by $n_{\text{d}}$ slots and denote the corresponding probability as ${{p}_{ij}}\left( {{n}_{\text{d}}} \right)$.
    Likewise, the transition probability of the delaying action is (see Fig. \ref{fig:Fig2_stratiges}(b))
\begin{align}\label{eq:Transition probability of delay strategy}
    {{p}_{ij}}\left( {{n}_{\text{d}}} \right)=\left\{
    \begin{aligned}
      & 0,&& j\ge i+{T}_{\text{tgt}}-{{n}_{\text{d}}} \\
      & p{{\left( 1-p \right)}^{i-{{n}_{\text{d}}}-j+{T}_{\text{tgt}}-1}},&& j<i+{T}_{\text{tgt}}-{{n}_{\text{d}}}. \\
    \end{aligned} \right.
\end{align}

Third, given that the current state is $s_m=i$ and the packet $m$ is dropped, its transmission time would be deterministic, i.e., $S=0$.
    In addition, the packet $m+1$ will be generated and then transmitted immediately.
Thus, the time for the transmission of the $(m+1)$-st packet to be received strictly on-time (i.e., its state) is $j=i+{T}_{\text{tgt}}$ (cf. Fig. \ref{fig:Fig2_stratiges}(c)).
    That is, the transition probability $p_{ij}$ from state $i$ to state $j$ is
\begin{align}\label{eq:Transfer probability of discard policy}
    {{p}_{ij}}=\left\{
    \begin{aligned}
      & 1,&&j=i+{T}_{\text{tgt}} \\
      & 0,&&j\ne i+{T}_{\text{tgt}}. \\
    \end{aligned} \right.
\end{align}

Fourth, when the packet is allowed to be repeated with a maximum number $n_\text{r}$ retransmissions, we denote the transition probability from state $i$ to state $j$ as $p_{ij}(n_\text{r})$, on which we have the following proposition.

\begin{proposition}\label{eq:Proof formula of reward function}
When a packet is allowed to be retransmitted for at most $n_\text{r}$ times and
if $i\le 1+\delta$, we have
\begin{align} \label{eq:Repeat strategy transition probability 1}
{{p}_{ij}}(n_\text{r})=\left\{
\begin{aligned}
  & 0,&& j\ge i+{T}_{\text{tgt}} \\
  & p{{(1-p)}^{i-j+{T}_{\text{tgt}}-1}},&&j<i+{T}_{\text{tgt}}; \\
\end{aligned} \right.
\end{align}
if $1+\delta <i< 1+\delta +n_\text{r}$, we have
\begin{align} \label{eq:Repeat strategy transition probability 2}
{{p}_{ij}}\left( n_\text{r} \right)=\left\{
 \begin{aligned}
  & 0,&&j\ge 1+\delta +{T}_{\text{tgt}} \\
  & p{{(1-p)}^{\delta -j+{T}_{\text{tgt}}}},&&j<1+\delta +{T}_{\text{tgt}}; \\
 \end{aligned} \right.
\end{align}
if $i\ge 1+\delta +n_\text{r}$, we have
\begin{align} \label{eq:Repeat strategy transition probability 3}
 &{{p}_{ij}}\left( n_\text{r} \right) \nonumber \\
 &=\left\{
 \begin{aligned}
  &0, &&y\le n_\text{r} \\
  &C_{y-1}^{n_\text{r}}{{p}^{1+n_\text{r}}}{{\left( 1-p \right)}^{y-1-n_\text{r}}}, &&n_\text{r}<y\le i-1-\delta   \\
  &\sum\limits_{m=0}^{n_\text{r}}{C_{i-1-\delta }^{m}{{p}^{1+m}}{{\left( 1-p \right)}^{^{y-1-m}}}}, &&y> i-1-\delta,
\end{aligned} \right.
\end{align}
in which $y=i-j+{T}_{\text{tgt}}$.
\end{proposition}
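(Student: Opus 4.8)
The plan is to reduce the computation of $p_{ij}(n_\text{r})$ to the distribution of the effective transmission time $S_{n_\text{r}}$. By the state recursion \eqref{eq:transfer}, the next state obeys $j=i-S_{n_\text{r}}+T_{\text{tgt}}$, i.e. $S_{n_\text{r}}=i-j+T_{\text{tgt}}=y$, so it suffices to prove $p_{ij}(n_\text{r})=\Pr\{S_{n_\text{r}}=y\}$ and then evaluate this probability in each of the three regimes. Throughout I would use the fact, established around \eqref{eq:service time}, that a single transmission time is geometric with parameter $p$.

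The key device is to represent the whole repeating mechanism on one underlying sequence of i.i.d. Bernoulli($p$) trials in which a ``success'' marks a transmission completion. Then the cumulative time after $t$ completions is exactly the position of the $t$-th success, and since a retransmission is triggered precisely when the running cumulative time is still below the left boundary $i-\delta$ of the target range, a retransmission is spent on every success landing in $\{1,\dots,i-\delta-1\}$, up to the cap $n_\text{r}$. This reading converts ``number of retransmissions used'' into ``number of successes before position $i-\delta$'', a quantity bounded by $i-\delta-1$; comparing $i-\delta-1$ against $n_\text{r}$ is exactly what generates the three cases.

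First I would dispatch the two easy regimes. When $i\le 1+\delta$ the early region $\{1,\dots,i-\delta-1\}$ is empty, so no retransmission can ever occur, $S_{n_\text{r}}$ is a single geometric time, and \eqref{eq:Repeat strategy transition probability 1} follows immediately from \eqref{eq:service time} and $S\ge 1$. When $1+\delta<i<1+\delta+n_\text{r}$ the number of possible retransmissions is at most $i-\delta-1<n_\text{r}$, so the cap never binds and the process always halts at the \emph{first} success located at or beyond position $i-\delta$. By the memorylessness of the Bernoulli sequence that first success sits at position $y$ with probability $p(1-p)^{y-(i-\delta)}$ for $y\ge i-\delta$, independently of the configuration before $i-\delta$; substituting $y=i-j+T_{\text{tgt}}$ yields \eqref{eq:Repeat strategy transition probability 2}.

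The hard part is the regime $i\ge 1+\delta+n_\text{r}$, where the cap can bind and I would split by the value of $y=S_{n_\text{r}}$. For $y\le n_\text{r}$ both halting mechanisms are impossible --- reaching the range needs $y\ge i-\delta\ge 1+n_\text{r}$, while exhausting the cap needs $n_\text{r}+1$ completions hence $y\ge n_\text{r}+1$ --- so the probability is zero. For $n_\text{r}<y\le i-1-\delta$ the packet is still too early, forcing all $n_\text{r}$ retransmissions to be used, i.e. exactly $n_\text{r}+1$ completions summing to $y$; this is a negative-binomial count giving $C_{y-1}^{n_\text{r}}p^{1+n_\text{r}}(1-p)^{y-1-n_\text{r}}$. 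For $y>i-1-\delta$ (equivalently $y\ge i-\delta$) the process halts on entering the range, and I would condition on the number $m\le n_\text{r}$ of retransmissions actually used: this demands exactly $m$ successes inside $\{1,\dots,i-\delta-1\}$, all of $\{i-\delta,\dots,y-1\}$ to be failures, and a success at $y$, with probability $C_{i-1-\delta}^{m}p^{1+m}(1-p)^{y-1-m}$; summing over $m=0,\dots,n_\text{r}$ gives \eqref{eq:Repeat strategy transition probability 3}. The main obstacle I anticipate is the bookkeeping of this last sub-case: I must verify that the ``$m$ successes before $i-\delta$ plus all failures on $\{i-\delta,\dots,y-1\}$'' configuration exactly encodes that $y$ is the first completion in the range (so no earlier halt occurred) and that the summation range $m\le n_\text{r}$ faithfully enforces the cap. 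A useful final consistency check would be confirming that the three branches, together with the easy regimes, sum to one as required by \eqref{eq:1}.
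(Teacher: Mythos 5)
Your proof is correct, but it follows a genuinely different route from the paper's. The paper derives $p_{ij}(n_\text{r})$ from Proposition \ref{Probability distribution function of repetition strategy}: it first generalizes the CCDF formulas \eqref{corollary2.1}--\eqref{corollary2.3} from the single-packet state $i=T_{\text{tgt}}$ to an arbitrary state $i$, and then extracts the PMF by differencing, $p_{ij}(n_\text{r})=\Pr\{S_{n_\text{r}}>y-1\}-\Pr\{S_{n_\text{r}}>y\}$, which in the regime $i\ge 1+\delta+n_\text{r}$ requires a nontrivial telescoping manipulation of binomial coefficients (via $C_{y}^{m}=\tfrac{y}{y-m}C_{y-1}^{m}$) to collapse a difference of two sums into the single negative-binomial term $C_{y-1}^{n_\text{r}}p^{1+n_\text{r}}(1-p)^{y-1-n_\text{r}}$. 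You instead compute the PMF directly by classifying sample paths of one underlying i.i.d.\ Bernoulli($p$) sequence: a stop at $y\le i-1-\delta$ can only be a cap-exhaustion event (exactly $n_\text{r}$ successes before position $y$, then a success at $y$), while a stop at $y\ge i-\delta$ requires exactly $m\le n_\text{r}$ successes in the early window $\{1,\dots,i-\delta-1\}$, no successes in $\{i-\delta,\dots,y-1\}$, and a success at $y$, whence the $m$-sum. The bookkeeping worry you flag resolves exactly as you anticipate: the absence of successes in $\{i-\delta,\dots,y-1\}$ rules out an earlier range-halt, and $m\le n_\text{r}$ rules out an earlier cap-halt (when $m=n_\text{r}$ the two halting rules coincide at $y$, which is harmless). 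What your route buys is self-containedness and interpretability: it bypasses Proposition \ref{Probability distribution function of repetition strategy} entirely (and with it the induction identity \eqref{eq:Combination number}), it handles arbitrary states $i$ natively rather than by the paper's somewhat informal substitution of $i$ for $T_{\text{tgt}}$, and every term acquires a direct probabilistic meaning ($m$ being the number of retransmissions actually used). What the paper's route buys is reuse: having already paid for the CCDF result, the remaining work is pure algebra, and Propositions \ref{Probability distribution function of repetition strategy} and \ref{eq:Proof formula of reward function} stay manifestly consistent with each other.
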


\begin{proof}
    See Appendix \ref{Proof of corollary 3}.
    As an intuitive explanation, $y=i-j+{T}_{\text{tgt}}$ is time reserved for the transmissions and the retransmissions of packet $m$  (see Fig. \ref{fig:Fig2_stratiges}(d)).
        As long as the packet is received at the destination before the target reception range $[m{T}_{\text{tgt}}-\delta,m{T}_{\text{tgt}}+\delta]$ and the number of retransmissions is less than $n_\text{r}$, the retransmission of packet $m$ continues.
\end{proof}
\subsubsection{Reward function}
When the state of the system transits from $s_m=i$ to $s_{m+1}=j$, we define the reward ${{r}_{ij}}$ of the system as
\begin{align}\label{eq:Definition of reward}
{{r}_{ij}}=\left\{
\begin{aligned}
  1,\quad& \text{if}\ {T}_{\text{tgt}}-\delta \le j\le {T}_{\text{tgt}}+\delta  \\
  0,\quad& \text{else}. \\
\end{aligned} \right.
\end{align}
That is, the total reward of the system (i.e. the number of packets received $\delta$-on time) will increase by one if the $m$-th packet is received $\delta$-on time.
    Otherwise, the reward is set to zero.

From a state $i$ and with an action $a$, we define the expected reward function as  $r\left( i,a \right)$, which can be calculated by
\begin{align}\label{eq:Definition of reward function}
        r\left( i,a \right)=\sum\limits_{j\in \mathcal{S}}{{{p}_{ij}}\left( a \right){{r}_{ij}}},
\end{align}
in which ${{p}_{ij}}\left( a \right)$ (see \eqref{eq:Transition probability without control strategy} to \eqref{eq:Repeat strategy transition probability 3}) is the state transition probability under action $a$.
    In the following, we shall investigate the reward functions of different controlling strategies case by case.

First, we denote the reward function of the random transmissions as $R\left( i \right)$.
    By combining equations \eqref{eq:Transition probability without control strategy}, \eqref{eq:Definition of reward}, and \eqref{eq:Definition of reward function}, we have
\begin{align}\label{eq:No control strategy reward function complex}
R\left( i \right)& =\sum\limits_{j={T}_{\text{tgt}}-\delta }^{{T}_{\text{tgt}}+\delta }{{{p}_{ij}}}\nonumber \\
& =\left\{
 \begin{aligned}
  & 0, &&i\le -\delta  \\
  & \sum\limits_{j={T}_{\text{tgt}}-\delta }^{{T}_{\text{tgt}}+\delta }{p{{\left( 1-p \right)}^{i-j+{T}_{\text{tgt}}-1}}} ,&&i> \delta \\
  & \sum\limits_{j={T}_{\text{tgt}}-\delta }^{i+{T}_{\text{tgt}}-1}{p{{\left( 1-p \right)}^{i-j+{T}_{\text{tgt}}-1}}},&&\text{else},
 \end{aligned}\right.
\end{align}
which is equivalent to
\begin{equation}\label{eq:No control strategy reward function}
R\left( i \right)=\left\{
\begin{aligned}
  & 0, && i\le -\delta  \\
  & {{\left( 1-p \right)}^{i-1-\delta }}\left[ 1-{{\left( 1-p \right)}^{1+2\delta }} \right], && i>\delta  \\
  & 1-{{\left( 1-p \right)}^{i+\delta }}, &&\text{else}.  \\
\end{aligned} \right.
\end{equation}

Second, we denote the reward function of the delaying  strategy as ${{R}_{\text{DL}}}\left( i,{{n}_{\text{d}}} \right)$, where ${n}_{\text{d}}$ is the delay time.
    By replacing $i$ in \eqref{eq:No control strategy reward function} with $i-{n}_{\text{d}}$, we have
\begin{align}\label{eq:Reward function of delay strategy}
& {{R}_{\text{DL}}}\left( i,{{n}_{\text{d}}} \right)\nonumber \\
& =\left\{
  \begin{aligned}
  & 0, &&i\le {{n}_{\text{d}}}-\delta  \\
  & {{\left( 1-p \right)}^{i-{{n}_{\text{d}}}-\delta -1}}\left[ 1-{{\left( 1-p \right)}^{1+2\delta }} \right],&&i>\delta +{{n}_{\text{d}}} \\
  & 1-{{\left( 1-p \right)}^{\delta +i-{{n}_{\text{d}}}}},&&\text{else}. \\
  \end{aligned}\right.
\end{align}

Third, we denote the reward function of the dropping strategy as  ${{R}_{\text{DP}}}\left( i \right)$.
    Since a dropped packet can never be received on time, we immediately have
\begin{align}\label{eq:Reward function of drop policy}
{{R}_{\text{DP}}}\left( i \right)=0.
\end{align}

Fourth, we denote the reward function of the repeating strategy as ${{R}_{\text{RP}}}\left( i, n_\text{r} \right)$, in which $n_\text{r}$ is the maximum allowed retransmissions.
    From \eqref{eq:Definition of reward} and \eqref{eq:Definition of reward function}, we have
\begin{align}\label{eq:rwd:dp_1}
{{R}_{\text{RP}}}\left( i,n_\text{r} \right)=\sum\limits_{j={T}_{\text{tgt}}-\delta }^{{T}_{\text{tgt}}+\delta }{{p}_{ij}}\left( n_\text{r} \right),
\end{align}

In case $i\le 1+\delta$, from \eqref{eq:Repeat strategy transition probability 1}, we have
\begin{align}
{{R}_{\text{RP}}}\left( i,n_\text{r} \right)
& =\left\{
  \begin{aligned}
  & 0,&&i\le-\delta \\
  & \sum\limits_{j={T}_{\text{tgt}}-\delta }^{i+{T}_{\text{tgt}}-1}{p{{\left( 1-p \right)}^{i-j+{T}_{\text{tgt}}-1}}},&&i>-\delta. \\
  \end{aligned} \right.
\end{align}

In case $1+\delta <i\le 1+\delta +n_\text{r}$, from \eqref{eq:Repeat strategy transition probability 2}, we have
\begin{align}
{{R}_{\text{RP}}}\left( i,n_\text{r} \right)
    =\sum\limits_{j={T}_{\text{tgt}}-\delta }^{{T}_{\text{tgt}}+\delta }{p{{\left( 1-p \right)}^{\delta -j+{T}_{\text{tgt}}}}}.
\end{align}

In case  $i>1+\delta +n_\text{r}$, from \eqref{eq:Repeat strategy transition probability 3}, we have
\begin{align}\label{eq:rwd:dp_4}
{{R}_{\text{RP}}}\left( i,n_\text{r} \right)
    & =\sum\limits_{j={T}_{\text{tgt}}-\delta }^{{T}_{\text{tgt}}+\delta }{\sum\limits_{m=0}^{n_\text{r}}{C_{i-1-\delta }^{m}{{p}^{1+m}}{{\left( 1-p \right)}^{i-j+{T}_{\text{tgt}}-1-m}}}}.
\end{align}

By combing \eqref{eq:rwd:dp_1}$\sim$\eqref{eq:rwd:dp_4}, the reward function of the repeating strategy can be expressed as
\begin{align}\label{eq:Reward function of repeated policy}
& {{R}_{\text{RP}}}\left( i,n_\text{r} \right)=\left\{
  \begin{aligned}
  & 0,\ \ \ \ \ \ \ \ \ \ \ \ \ \ \ \ \quad \ \ \ \ \ i\le -\delta  \\
  & 1-{{\left( 1-p \right)}^{i+\delta }},\qquad\ \ -\delta <i\le 1+\delta  \\
  & 1-{{\left( 1-p \right)}^{1+2\delta }},\ \ \ \ \ \ 1+\delta <i\le 1+\delta +n_\text{r} \\
  & D\left[ 1-{{\left( 1-p \right)}^{1+2\delta }} \right],  i>1+\delta +n_\text{r}, \\
  \end{aligned} \right.
\end{align}
in which $D=\sum\limits_{m=0}^{n_\text{r}}{C_{i-1-\delta }^{m}{{p}^{m}}{{\left( 1-p \right)}^{i-1-\delta -m}}}$.

\subsubsection{Optimal Packet Scheduling Policy}
A packet scheduling policy $\boldsymbol{\pi}$ is a rule for choosing actions (i.e., delaying, dropping, or repeating) for each packet, i.e., a mapping from the state space $\mathcal{S}$ to the action space $\mathcal{A}$.
    Specifically, for each state $s=i$ (can be negative), the corresponding element of $\boldsymbol{\pi}$ specifies which action should be taken for the current packet.
In case the packet should be delayed, ${\pi}_i$ also indicates how long it should be delayed, i.e., determining $n_{\text{d}}$; in case the packet should be repeated, ${\pi}_i$ also indicates how many times it could be retransmitted, i.e., determining $n_\text{r}$; in case the packet should be randomly transmitted or dropped, no other parameters are needed to be determined.

For a sequence of packet transmissions, we  seek such a policy $\boldsymbol{{{\pi }^{*}}}$ that maximizes the average reward of the system with any initial state ${s}_1=i$.
    That is,
\begin{align}\label{eq:Average reward of the system}
{\boldsymbol{\pi}}^{*} =\arg \underset{\boldsymbol{\pi}}{\mathop{\max }}\,\mathbb{E}\left[\frac{1}{M} \sum\limits_{m=1}^{M}{R\left( {{{s}}_{m}},{{a}_{m}} \right)}\left| {{{s}}_{1}}=i \right.\right],
\end{align}
in which ${{{s}}_{m}}\left( m=1,2,\ldots ,M,{{{s}}_{m}}\in \mathcal{S} \right)$ is the state of the $m$-th packet, ${{a}_{m}}\left( m=1,2,\ldots ,M,{{a}_{m}}\in \mathcal{A} \right)$ is the action assigned for the $m$-th packet, $R\left( {{{s}}_{m}},{{a}_{m}} \right)$ is the reward function of the $m$-th packet when the state is ${{{s}}_{m}}$ and the action is ${{a}_{m}}$.
    From \eqref{eq:Definition of reward function}, we know that the reward function $R\left( {{{s}}_{m}},{{a}_{m}} \right)$ is also the probability that the packet is received $\delta$-on time after taking action ${{a}_{m}}$ in state ${{{s}}_{m}}$ so that we define the corresponding cost as
\begin{align}\label{eq:Cost function definition}
C\left( {{{s}}_{m}},{{a}_{m}} \right)=1-R\left( {{{s}}_{m}},{{a}_{m}} \right),
\end{align}
which is non-negative.
Therefore, we can also find out the optimal packet scheduling policy $\boldsymbol{{{\pi }^{*}}}$ by minimizing the following average cost.
\begin{align}\label{eq:cost}
{{V}_{{\boldsymbol{\pi}^{*}}}}\left( {i} \right)=\underset{\boldsymbol{\pi}}{\mathop{\min }}\,\mathbb{E}\left[\frac{1}{M} \sum\limits_{m=1}^{M}{C\left( {{{s}}_{m}},{{a}_{m}} \right)}\left| {{{s}}_{1}}=i \right. \right]
\end{align}

As shown in \cite[Chap. 6.7, \emph{Theorem} 6.17]{S.M.Ross1970}, \eqref{eq:cost} can be solved by the following functional equation,
\begin{align}\label{eq:cost conversion}
g+h\left( {i} \right)=\underset{a\in \mathcal{A}}{\mathop{\min }}\,\left\{ C\left( {s},a \right)+\sum\limits_{{j}\in \mathcal{S}}{{{p}_{{i}{j}}}\left( a \right)h\left( {j} \right)} \right\},
\end{align}
in which $g$ is a constant, $h\left( {i} \right)$ is a bounded function, ${{p}_{{i}{j}}}\left( a \right)$ is the state transition probability of the packet from state ${i}$ to state ${j}$ when  action $a$ is taken.

 It is noted, however, that \eqref{eq:cost conversion} is not a contraction mapping \cite[Chap. 6.4, \emph{Theorem} 6.10]{S.M.Ross1970}.
    Thus, the searching process with \eqref{eq:cost conversion}  may not converge or converge very slowly.
 This motivates us to consider an alternative expected total $\alpha$-discounted cost as shown below.
\begin{align}\label{eq:alpha-cost}
{{V}_{\boldsymbol{\pi} _{\alpha }^{*}}}\left( {i} \right)=\underset{\boldsymbol{\pi} }{\mathop{\min }}\,\mathbb{E}\left[\frac{1}{M} \sum\limits_{m=1}^{M}{{{\alpha }^{m}}C\left( {{{s}}_{m}},{{a}_{m}} \right)\left| {{{s}}_{1}}=i \right.} \right]
\end{align}
for all $i\in \mathcal{S}$, in which $0<\alpha <1$ is a discounting factor.
    Moreover, the $\alpha$-optimal policy $\boldsymbol{\pi} _{\alpha }^{*}$ and the $\alpha$-optimal cost function ${{V}_{\alpha }}\left( {i} \right)$ satisfies \cite[Chap. 6.7, (24)]{S.M.Ross1970},
\begin{align}\label{eq:Iteration cost}
{{V}_{\alpha }}\left({i} \right)=\underset{a\in \mathcal{A}}{\mathop{\min }}\,\left\{ C\left( i,a \right)+\alpha \sum\limits_{{j}\in \mathcal{S}}{{{p}_{{i}{j}}}\left( a \right){{V}_{\alpha }}\left( {j} \right)} \right\}.
\end{align}
Particularly, the following theorem shows that as $\alpha$ approaches unity, $\boldsymbol{\pi} _{\alpha }^{*}$ would converge to ${{\boldsymbol{\pi} }^{*}}$.

\begin{theorem}\label{Optimal policy}
For some sequence ${{\alpha }_{n}}\to 1$, we have $h\left( {s} \right)=\underset{n\to \infty }{\mathop{\lim }}\,{{V}_{{{\alpha }_{n}}}}\left( {s} \right)-{{V}_{{{\alpha }_{n}}}}\left( {{{s}}_{1}} \right)$, $g=\underset{\alpha \to 1}{\mathop{\lim }}\,\left( 1-\alpha  \right){{V}_{\alpha }}\left( {{{s}}_{1}} \right)$, for any fixed reference state ${{{s}}_{1}}$. In particular, \eqref{eq:cost} and \eqref{eq:alpha-cost} share the same optimal policy.
\end{theorem}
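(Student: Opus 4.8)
The plan is to follow the classical vanishing-discount (Abelian) argument underlying \cite[Chap.~6.7, \emph{Theorem}~6.17]{S.M.Ross1970}: I would verify that the MDP constructed in this section meets the hypotheses of that theorem, and then let the $\alpha$-discounted solutions of \eqref{eq:Iteration cost} tend to a solution of the average-cost optimality equation \eqref{eq:cost conversion} as $\alpha\to 1$. The two limits in the statement, $h(s)$ and $g$, are precisely the relative value function and the optimal average cost produced by this limiting procedure.

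First I would start from the $\alpha$-discounted optimality equation \eqref{eq:Iteration cost}, fix the reference state ${s}_1$, and define the relative value function $h_\alpha(i)=V_\alpha(i)-V_\alpha({s}_1)$. Subtracting $V_\alpha({s}_1)$ from both sides of \eqref{eq:Iteration cost} and using $\sum_{j\in\mathcal{S}}p_{ij}(a)=1$ from \eqref{eq:1}, the discounted equation can be rewritten as
\begin{align}\label{eq:rewritten-dpe}
(1-\alpha)V_\alpha({s}_1)+h_\alpha(i)=\min_{a\in\mathcal{A}}\left\{ C(i,a)+\alpha\sum_{j\in\mathcal{S}}p_{ij}(a)\,h_\alpha(j)\right\}.
\end{align}
This display already has the shape of \eqref{eq:cost conversion}, with $(1-\alpha)V_\alpha({s}_1)$ playing the role of $g$ and $h_\alpha$ playing the role of $h$; the remaining work is to take a controlled limit $\alpha\to 1$.

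The main obstacle is to establish that the family $\{h_\alpha\}_{0<\alpha<1}$ is bounded at every state, uniformly in $\alpha$, and that $(1-\alpha)V_\alpha({s}_1)$ stays bounded; this is delicate because the state space $\mathcal{S}=\mathbb{Z}$ is infinite. Here I would exploit that the per-stage cost $C(i,a)=1-R(i,a)\in[0,1]$ is bounded (since $R$ is a probability, cf.\ \eqref{eq:Cost function definition}), so it suffices to bound the expected cost of driving the chain from an arbitrary state $i$ back to the reference state ${s}_1$. The transition structure \eqref{eq:Transition probability without control strategy}--\eqref{eq:Repeat strategy transition probability 3} shows that under the random-transmission action the target range, and hence ${s}_1$, is reachable from every state in finitely many steps with a uniformly positive probability; this communicating property yields the required uniform bound on $h_\alpha$. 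Given these bounds, I would extract a sequence $\alpha_n\to 1$ along which $h_{\alpha_n}(i)$ converges pointwise for each $i$ — a diagonal extraction is available because $\mathcal{S}$ is countable — and, passing to a further subsequence, along which $(1-\alpha_n)V_{\alpha_n}({s}_1)\to g$; the pointwise limit is the claimed $h(i)$. Passing to the limit in \eqref{eq:rewritten-dpe} then produces \eqref{eq:cost conversion}. The two technical points in this passage are interchanging the limit with the infinite sum $\sum_{j}p_{ij}(a)h_{\alpha_n}(j)$, handled by dominated convergence using the uniform bound on $h_\alpha$ together with $\sum_j p_{ij}(a)=1$, and interchanging the limit with the minimum, which is justified because for each state only finitely many actions (bounded delays $n_{\text{d}}$ and repetitions $n_\text{r}$) are relevant, the rest yielding strictly larger values.

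Finally I would invoke the verification part of \cite[\emph{Theorem}~6.17]{S.M.Ross1970}: any stationary policy attaining the minimum in \eqref{eq:cost conversion} is average-cost optimal, with optimal average cost $g$ and relative value function $h$, so this limiting equation characterizes the solution of \eqref{eq:cost}. Since the minimizing actions in \eqref{eq:cost conversion} are obtained as limits of the $\alpha_n$-optimal actions in \eqref{eq:Iteration cost}, the discounted-optimal policy $\boldsymbol{\pi}_\alpha^{*}$ converges to the average-cost optimal policy $\boldsymbol{\pi}^{*}$ as $\alpha_n\to 1$. Hence \eqref{eq:cost} and \eqref{eq:alpha-cost} share the same optimal policy, which completes the argument and justifies solving the tractable discounted problem in place of the average-cost problem.
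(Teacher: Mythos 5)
Your proposal is correct and takes essentially the same route as the paper's own proof: the paper likewise establishes a communication property of the chain (irreducibility under the given transition probabilities) to obtain the uniform boundedness of $V_{\alpha}(i)-V_{\alpha}(s_1)$ via \cite[Chap.~6.8, \emph{Corollary}~6.20]{S.M.Ross1970}, and then invokes \cite[Chap.~6.7, \emph{Theorem}~6.17]{S.M.Ross1970} to conclude that the average-cost and discounted problems share the same optimal policy. The only difference is that you unpack the two citations the paper treats as black boxes --- deriving the uniform bound from reachability of the reference state and carrying out the vanishing-discount subsequence argument explicitly --- which is the same underlying argument in expanded form.
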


\begin{proof}
Based on the state transition probabilities given in \eqref{eq:Transition probability of delay strategy}, \eqref{eq:Transfer probability of discard policy}, \eqref{eq:Repeat strategy transition probability 1}, \eqref{eq:Repeat strategy transition probability 2} and \eqref{eq:Repeat strategy transition probability 3}, it can be seen that each state can reach all other states directly or indirectly through some intermediate states, which means that the Markov chain is irreducible. According to \cite[Chap. 6.8, \emph{Corollary} 6.20]{S.M.Ross1970}, ${{V}_{\alpha }}\left( {s} \right)-{{V}_{\alpha }}\left( {{{s}}_{1}} \right)$ would be uniformly bounded, and hence the conditions of \cite[Chap. 6.7, \emph{Theorem} 6.17]{S.M.Ross1970} are satisfied, which yield the results in \emph{Theorem} \ref{Optimal policy} immediately.
\end{proof}

Theorem \ref{Optimal policy} shows that \eqref{eq:Average reward of the system}, \eqref{eq:cost} and \eqref{eq:alpha-cost} have the same optimal scheduling policy.
    Thus, the optimal scheduling policy of the system can be calculated by the matrix iteration method.
To be specific, for each state $i$, we shall calculate the expected costs $C\left( i,a \right)+\alpha \sum_{{j}\in \mathcal{S}} {{{p}_{{i}{j}}}\left( a \right){{V}_{\alpha }}( {j}) }$ for each action $a\in \mathcal{A}$, including the random transmission, dropping the packet, delaying the packet for some slots ($n_{\text{d}}=1,2,\cdots, n_{\text{d}}^{\max }$), or retransmit the packet for some times ($n_\text{r}=1,2,\cdots,n_{\text{r}}^{\max }$).
    With the obtained expected costs, we can determine the best action and update the cost ${{V}_{\alpha }}\left({i} \right)$ with \eqref{eq:Iteration cost}.
    In particular, it was shown in \cite[Chap. 6.8]{S.M.Ross1970} that the mapping shown in \eqref{eq:Iteration cost} is contract mapping.
We denote the vector of all the states as $\boldsymbol{s}$ and the cost vector as ${\boldsymbol{V}_{\alpha }}$.
    By applying \eqref{eq:Iteration cost} to $\boldsymbol{s}$ (which is done element by element) iteratively, the cost vector ${\boldsymbol{V}_{\alpha }}$ would then converge to the optimal cost vector while the obtained actions are all optimal for the corresponding states, as shown in {Algorithm} \ref{Alg: Optimal scheduling strategy algorithm}.

Since the state of a packet is the difference between its transmission starting time and its target reception time, the state space $\mathcal{S}$ is often infinitely large.
     The probability for the state of packets to be very large or small, however, is very small and can be neglected.
 Thus, we shall limit the state space to the set of integers within the finite range $\left[ {\iota }_{\min },  {\iota }_{\max } \right]$, so that we can solve the problem more efficiently.
    In this case, the number of desirable states is ${\iota }_{\max }-{\iota }_{\min }+1$.

\begin{algorithm}[!t]
\algsetup{linenosize=\small}
\scriptsize
\caption{Solving the optimal packet scheduling policy}
\begin{algorithmic}[1]\label{Alg: Optimal scheduling strategy algorithm}
\STATE \textbf{Input}: \\
  cost matrix ${{\textbf{C}}_{\text{DL}}}$ and transition probability matrix ${{\textbf{P}}_{\text{DL}}}$ of the delaying strategy;\\
  cost vector ${{\boldsymbol{C}}_{\text{DP}}}$ and transition probability matrix ${{\textbf{P}}_{\text{DP}}}$ of the dropping strategy;\\
 cost matrix ${{\textbf{C}}_{\text{RP}}}$ and transition probability matrix ${{\textbf{P}}_{\text{RP}}}$ of the repeating strategy;\\
\STATE \textbf{Initialization}:\\
 Set iteration error to $\Delta v=+\infty$, $\varepsilon ={{10}^{-3}}$; \\
 Initialize the cost function vector ${{\boldsymbol{V}}_{\alpha }}=\text{zeros}\left( {\iota }_{\max }-{\iota }_{\min }+1,1 \right)$;\\
 Initialize the policy vector ${\boldsymbol{\pi}_{\alpha }^*}=\text{zeros}\left( {\iota }_{\max }-{\iota }_{\min }+1,1 \right)$;
\STATE \textbf{Iteration}:\\
\quad \textbf{while}: $\Delta v>\varepsilon$, \textbf{do}\\
\quad\quad ${{\boldsymbol{f}}_{\text{DP}}}={{\boldsymbol{C}}_{\text{DP}}}+\alpha {{\textbf{P}}_{\text{DP}}}{{\boldsymbol{V}}_{\alpha }}$;\\
\quad\quad \textbf{for} ${{n}_{\text{d}}}=0$ to $n_{\text{d}}^{\max }$ \textbf{do}\\
\quad\quad \quad ${{\textbf{F}}_{\text{DL}}}\left( :,{{n}_{\text{d}}}+1 \right)={{\textbf{C}}_{\text{DL}}}\left( :,{{n}_{\text{d}}}+1 \right)+\alpha {{\textbf{P}}_{\text{DL}}}\left( :,:,{{n}_{\text{d}}}+1 \right){{\boldsymbol{V}}_{\alpha }}$;\\
\quad\quad \textbf{end} \textbf{for}\\
\quad\quad \textbf{for} $n_\text{r}=0$ to $n_{\text{r}}^{\max }$ \textbf{do}\\
\quad\quad\quad ${{\textbf{F}}_{\text{RP}}}\left( :,n_\text{r}+1 \right)={{\textbf{C}}_{\text{RP}}}\left( :,n_\text{r}+1 \right)+\alpha {{\textbf{P}}_{\text{RP}}}\left( :,:,n_\text{r}+1 \right){{\boldsymbol{V}}_{\alpha }}$;\\
\quad\quad \textbf{end} \textbf{for}\\
\quad\quad ${{\boldsymbol{V}}^{\text{old}}}={{\boldsymbol{V}}_{\alpha }}$;\\
\quad\quad $\textbf{S}=\left[ {{\boldsymbol{f}}_{\text{DP}}},{{\textbf{F}}_{\text{DL}}},{{\textbf{F}}_{\text{RP}}} \right]$;\\
\quad\quad $\left[ {{\boldsymbol{V}}_{\alpha }},{\boldsymbol{\pi}_{\alpha }^*} \right]=\min \left( \textbf{S},2 \right)$;  \%find the minimum over the 2-nd dimenssion\\
\quad\quad $\Delta v=\max (|{{\boldsymbol{V}}_{\alpha }}-{{\boldsymbol{V}}^{\text{old}}}|)$;\\
\quad\textbf{end} \textbf{while}
\STATE \textbf{Output}: ${{\boldsymbol{V}}_{\alpha }}$, ${\boldsymbol{\pi}_{\alpha }^*}$.
\end{algorithmic}
\end{algorithm}

 From \eqref{eq:Transition probability of delay strategy}, \eqref{eq:Transfer probability of discard policy}, \eqref{eq:Repeat strategy transition probability 1}, \eqref{eq:Repeat strategy transition probability 2}, and \eqref{eq:Repeat strategy transition probability 3} we can explicitly express the transition matrices of delaying, dropping and repeating, which are denoted, respectively, as ${{\textbf{P}}_{\text{DL}}}$, ${{\textbf{P}}_{\text{DP}}}$, and ${{\textbf{P}}_{\text{RP}}}$.
    In particular, ${{\textbf{P}}_{\text{DL}}}$ and ${{\textbf{P}}_{\text{RP}}}$ are three-dimensional matrices.
In ${{\textbf{P}}_{\text{DL}}}$, the first and the second dimensions represent the states before and after the transition, while the third dimension represents the number of slots the packets are delayed, i.e., $n_{\text{d}}$.
    Likewise, the third dimension of ${{\textbf{P}}_{\text{RP}}}$  represents the maximum allowed number of retransmission, i.e., $n_\text{r}$.

From \eqref{eq:Reward function of delay strategy}, \eqref{eq:Reward function of drop policy}, \eqref{eq:Reward function of repeated policy}, and \eqref{eq:Cost function definition}, we can also obtain the cost functions of the three strategies, i.e., ${{\textbf{C}}_{\text{DL}}}$, ${{\boldsymbol{C}}_{\text{DP}}}$ and ${{\textbf{C}}_{\text{RP}}}$.
     Matrices ${{\textbf{C}}_{\text{DL}}}$ and ${{\textbf{C}}_{\text{RP}}}$ are two-dimensional matrices defining the costs for each state and each $n_\text{d}$ and $n_\text{r}$.

  Finally, $({\iota }_{\max }-{\iota }_{\min }+1)\times1$ vector of optimal packet scheduling policy  ${\boldsymbol{\pi}_{\alpha }^*}$ can be obtained by Algorithm \ref{Alg: Optimal scheduling strategy algorithm}.
        As shown in Theorem \ref{Optimal policy}, we have ${\boldsymbol{\pi}_{\alpha }^*}={\boldsymbol{\pi}^*}$, which specifies the actions for all the states.

With the obtained optimal scheduling policy ${\boldsymbol{\pi}_{\alpha }^*}$, which can be expressed by a state-action mapping table, we can then find out the optimal action (i.e., transmit it without control, delay it, drop it, or repeat it) of each packet based on its current state.
    As shown in our simulations in Section \ref{Simulation results}, the corresponding $\delta$-on-time reception rate achieves the optimal reward of the system exactly.

\subsection{Theoretical Analysis of  Expected Total Rewards}
In this section, we will analyze the total expected reward of two systems, in which the random transmission strategy and the packet scheduling is used respectively.
    In particular, the system with the packet scheduling would optimize the controlling strategy of each packet by maximizing the expected total reward of the system.

\subsubsection{System with Random Transmissions}
For the system using the random transmission strategy, the expected reward of a single transition from state $i$ can be expressed as
\begin{align}\label{eq:r_i}
R\left( i \right)=\sum\limits_{j\in \mathcal{S}}{{{p}_{ij}}{{r}_{ij}}},\ \ \ i\in \mathcal{S},
\end{align}
in which ${{p}_{ij}}$ (cf. \eqref{eq:Transition probability without control strategy}) and ${{r}_{ij}}$ (cf. \eqref{eq:Definition of reward}) are, respectively, the probability and the reward of the transition from $i$ to $j$.
    We denote the vector of expected transition rewards of all the states as $\boldsymbol{R}={{\left[ R\left( {\iota }_{\min } \right),R\left( {\iota }_{\min }+1 \right),\ldots ,R\left( {\iota }_{\max } \right) \right]}^{\mathbf{T}}}$.

We denote the expected total reward of a sequence of $m$ transitions from state $i$ as $v_{m}^{'}\left( i \right)$, which can be calculated based on $v_{m-1}^{'}\left( j \right)$ and ${r}_{ij}$ as
\begin{align}
v_{m}^{'}\left( i \right)& =\sum\limits_{j\in \mathcal{S}}{{{p}_{ij}}\left[ {r}_{ij}+v_{m-1}^{'}\left( j \right) \right]}\nonumber \\
& =R\left( i \right)+\sum\limits_{j\in \mathcal{S}}{{{p}_{ij}}v_{m-1}^{'}\left( j \right)},i\in \mathcal{S}.
\end{align}
  We denote the vector of the expected $m$-transition rewards of all the states as $\boldsymbol{V}_{m}^{'}={{\left[ v_{m}^{'}\left( {\iota }_{\min } \right),v_{m}^{'}\left( {\iota }_{\min }+1 \right),\ldots ,v_{m}^{'}\left( {\iota }_{\max } \right) \right]}^{\mathbf{T}}}$.
It is clear that
\begin{align}\label{eq:v1}
\boldsymbol{V}_{1}^{'}& =\boldsymbol{R}, \\
\label{eq:vm}
\boldsymbol{V}_{m}^{'}& =\boldsymbol{R}+\textbf{P}\boldsymbol{V}_{m-1}^{'},m=2,3,\ldots ,M,
\end{align}
in which $\textbf{P}$ is the state transition probability matrix of the random transmission strategy (cf. \eqref{eq:Transition probability without control strategy}).
    Starting from \eqref{eq:v1}, we repeatedly use \eqref{eq:vm} to obtain the expected total reward vector $\boldsymbol{V}_{M}^{'}$ of a sequence of $M-1$ state transitions  (Algorithm \ref{Total expected reward algorithm}).
Moreover, $\boldsymbol{V}_{M}^{'}$ is also the expected total reward of the system for transmitting $M$ packets to the destination node.
    \begin{remark}
        Note that the transmission of the first packet starts from the first slot and the corresponding initial state is $s_1 = {{T}_{\text{tgt}}} $.
            When the transmission of all of $M$ packets have been completed, the expected total reward of the system would then be $v_{M}^{'}({{T}_{\text{tgt}}})$, which can be obtained by Algorithm \ref{Total expected reward algorithm}.
        Note also that the expected total reward of a system using random transmissions equals to the  number $\kappa_M$ (cf. \eqref{df:kappa}) of packets received $\delta$-on-time, which can be obtained through classical probability methods, as shown equations \eqref{df:kappa} and \eqref{eq:expectation}, in Section \ref{Classical Probability Analysis}.
    In particular, it can be verified through simulations that the $v_{M}^{'}({{T}_{\text{tgt}}})$ obtained by Algorithm \ref{Total expected reward algorithm} equals to $\kappa_M$ exactly.
    \end{remark}


\subsubsection{System with Scheduling}
Likewise, we calculate the expected total reward of the system with packet scheduling iteratively, as shown in Algorithm \ref{Total expected reward algorithm}.

We note that the transition probability matrices ${{\textbf{P}}_{\text{DL}}}(n_\text{d})=[{{p}_{ij}} ({{n}_{\text{d}}}) ]$, ${{\textbf{P}}_{\text{DP}}}=[p_{ij}]$,
${{\textbf{P}}_{\text{RP}}}(n_\text{r})=[{{p}_{ij}} (n_\text{r}) ]$
of the delaying strategy, the dropping strategy, and the repeating strategy are given, respectively, by \eqref{eq:Transition probability of delay strategy}, \eqref{eq:Transfer probability of discard policy} and \eqref{eq:Repeat strategy transition probability 1} to \eqref{eq:Repeat strategy transition probability 3}.
    For each state $i\in \mathcal{S}$ and each chosen strategy, therefore, the expected reward of the next transition $R\left( i \right)$ can be calculated by \eqref{eq:r_i}, in which $p_{ij}$ is replaced by ${{p}_{ij}} ({{n}_{\text{d}}})$, $p_{ij}$, and ${{p}_{ij}} (n_\text{r}) $, respectively.

We denote the  vector of expected $m$-transition rewards of the system with scheduling as $\boldsymbol{V}_{m}={{\left[ v_{m}\left( {\iota }_{\min } \right),v_{m}\left( {\iota }_{\min }+1 \right),\ldots ,v_{m}\left( {\iota }_{\max } \right) \right]}^{\mathbf{T}}}$.
Given the expected $m$-transition reward vector $\boldsymbol{V}_{m}$, we shall first estimate the expected $(m+1)$-transition rewards for all the cases when the delaying (for all ${{n}_{\text{d}}}$), dropping, and the repeating (for all $n_\text{r}$) strategy are used.
Specifically, we have
\begin{align}
{{\textbf{F}}_{\text{DL}}}\left( :,{{n}_{\text{d}}}+1 \right)=&{{\textbf{R}}_{\text{DL}}}\left( :,{{n}_{\text{d}}}+1 \right)+ {{\textbf{P}}_{\text{DL}}}\left( :,:,{{n}_{\text{d}}}+1 \right){{\boldsymbol{V}}_{m }} \\
{{\boldsymbol{f}}_{\text{DP}}}=&{{\boldsymbol{R}}_{\text{DP}}}+ {{\textbf{P}}_{\text{DP}}}{{\boldsymbol{V}}_{m}} \\
{{\textbf{F}}_{\text{RP}}}\left( :,n_\text{r}+1 \right)=&{{\textbf{R}}_{\text{RP}}}\left( :,n_\text{r}+1 \right)+ {{\textbf{P}}_{\text{RP}}}\left( :,:,n_\text{r}+1 \right){{\boldsymbol{V}}_{m }},
\end{align}
     for ${{n}_{\text{d}}}=0,1,\cdots,{{n}_{\text{d}}^{\max}}$ and $n_\text{r}=0,1,\cdots,n_{\text{r}}^{\max }$.
For each state $i$, therefore, we have obtained the expected total reward for all controlling strategies (i.e., delaying, dropping, and repeating) and parameters (i.e., ${{n}_{\text{d}}}$ and ${n}_{\text{r}}$).
    By searching the maximum reward among $\{{{\textbf{F}}_{\text{DL}}}( i,1 ),\cdots,{{\textbf{F}}_{\text{DL}}}( i,{{n}_{\text{d}}^{\max}}+1 ), {{\boldsymbol{f}}_{\text{DP}}}(i), {{\textbf{F}}_{\text{RP}}}( i,1 ),\cdots, {{\textbf{F}}_{\text{RP}}}( i, n_{\text{r}}^{\max }+1 )\}$, we can then determine the optimal controlling action and parameter.
With the obtained controlling strategy and parameter, we can further update the expected $(m+1)$-transition rewards $\boldsymbol{V}_{m+1}$ of system.
    As shown in Algorithm \ref{Total expected reward algorithm}, this process continues until the controlling strategies of all the packets has been determined and the expected total reward of the system with scheduling is  ${v}_{M}\left( {T}_{\text{tgt}} \right)$.

\begin{algorithm}[!t]
\algsetup{linenosize=\small}
\scriptsize
\caption{Total expected reward}
\begin{algorithmic}[1]\label{Total expected reward algorithm}
\STATE \textbf{Input}: \\
 reward vector $\boldsymbol{R}$ and transition probability matrix $\textbf{P}$ of random transmission;\\
 reward vector ${{\boldsymbol{R}}_{\text{DP}}}$ and transition probability matrix ${{\textbf{P}}_{\text{DP}}}$ of drop strategy;\\
 reward matrix ${{\textbf{R}}_{\text{DL}}}$ and transition probability matrix ${{\textbf{P}}_{\text{DL}}}$ of delay strategy;\\
 reward matrix ${{\textbf{R}}_{\text{RP}}}$ and transition probability matrix ${{\textbf{P}}_{\text{RP}}}$ of repeat strategy;\\
\STATE \textbf{Initialization}:\\
 Initialize optimal scheduling policy vector ${{\boldsymbol{V}}_{0}}=\text{zeros}\left( {\iota }_{\max }-{\iota }_{\min }+1,1 \right)$;\\
 Initialize random transmission vector ${\boldsymbol{V}_{0}^{'}}=\text{zeros}\left( {\iota }_{\max }-{\iota }_{\min }+1,1 \right)$;\\
\STATE \textbf{Iteration}:\\
\quad \textbf{for}: $m=1$ to $M$ \textbf{do}\\
\quad\quad $\boldsymbol{V}_{m}^{'}=\boldsymbol{R}+\textbf{P}\boldsymbol{V}_{m-1}^{'}$;\\
\quad\quad ${{\boldsymbol{f}}_{\text{DP}}}={{\boldsymbol{R}}_{\text{DP}}}+ {{\textbf{P}}_{\text{DP}}}{{\boldsymbol{V}}_{m-1}}$;\\
\quad\quad \textbf{for} ${{n}_{\text{d}}}=0$ to $n_{\text{d}}^{\max }$ \textbf{do}\\
\quad\quad \quad ${{\textbf{F}}_{\text{DL}}}\left( :,{{n}_{\text{d}}}+1 \right)={{\textbf{R}}_{\text{DL}}}\left( :,{{n}_{\text{d}}}+1 \right)+ {{\textbf{P}}_{\text{DL}}}\left( :,:,{{n}_{\text{d}}}+1 \right){{\boldsymbol{V}}_{m-1 }}$;\\
\quad\quad \textbf{end} \textbf{for}\\
\quad\quad \textbf{for} $n_\text{r}=0$ to $n_{\text{r}}^{\max }$ \textbf{do}\\
\quad\quad\quad ${{\textbf{F}}_{\text{RP}}}\left( :,n_\text{r}+1 \right)={{\textbf{R}}_{\text{RP}}}\left( :,n_\text{r}+1 \right)+ {{\textbf{P}}_{\text{RP}}}\left( :,:,n_\text{r}+1 \right){{\boldsymbol{V}}_{m-1 }}$;\\
\quad\quad \textbf{end} \textbf{for}\\
\quad\quad $\textbf{S}=\left[ {{\boldsymbol{f}}_{\text{DP}}},{{\textbf{F}}_{\text{DL}}},{{\textbf{F}}_{\text{RP}}} \right]$;\\
\quad\quad ${{\boldsymbol{V}}_{m }}=\max \left( \textbf{S},2 \right)$;\\
\quad\textbf{end} \textbf{for}
\STATE \textbf{Output}: ${{\boldsymbol{V}}_{M }}$, ${\boldsymbol{V}_{M}^{'}}$.
\end{algorithmic}
\end{algorithm}

\section{Simulation results}\label{Simulation results}
In this section, we investigate the on-time reception rate of a sequence of $M$ packets transmission over a Rayleigh fading channel.
    In particular, we transmit a sequence of $M$ packets over the channel and schedule each packet with the optimal scheduling policy obtained by Algorithm \ref{Alg: Optimal scheduling strategy algorithm}.
We then calculate the corresponding on-time reception rate (which is referred to as the \textit{simulation result}) by counting the packets received $\delta$-on-time.
    Moreover, we also calculate the on-time reception rate theoretically using Algorithm \ref{Total expected reward algorithm}, which is referred to as the \textit{theoretical results}.

    The distribution of  the channel power gain of the Rayleigh fading channel is given by
    \begin{equation}
        f_\gamma(x)= \lambda e^{-\lambda x}.
    \end{equation}

We set the channel parameter as $\lambda=2$, the transmit power of the source node as $P_{\text{t}}=1$ W, the distance between the source  and  destination nodes as $d=100$ m, the path loss exponent as $\alpha=2$, and the channel noise as $\sigma^2=10^{-4}$ W.
    For a given SNR threshold $V_{\text{T}}$, the probability that the transmitted packet can be successfully decoded by the destination node would be $p=\exp(-\lambda V_{\text{T}}d^\alpha \sigma^2) =\exp(-2V_{\text{T}})$ (see \eqref{rt:prb_suc}).
Thus, we can adjust the probability of successful transmissions by changing the threshold $V_{\text{T}}$, as shown in Fig. \ref{fig:p_vt}.
    For example, we have $p=0.2$ if  $V_{\text{T}}=0.8047$.

\begin{figure}[!t]
  \centering
  \includegraphics[width=3.7in]{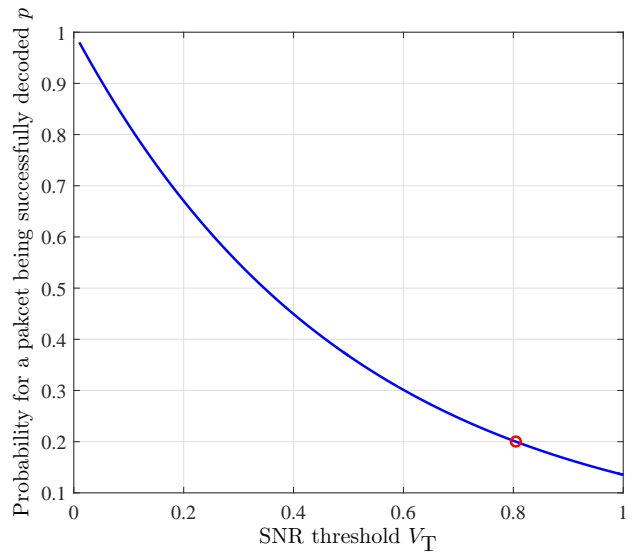}\\
  \caption{The probability for a packet being successfully received versus SNR threshold ($P_{\text{t}}=1$, $d=100$, $\alpha=2$, and $\sigma^2=10^{-4}$).}
  \label{fig:p_vt}
\end{figure}

 Without loss of generality, we consider a finite numbers of states and set the maximum and the minimum state as ${\iota }_{\max }=500$ and ${\iota }_{\min }=-500$, respectively, i.e., $s\in\left\{ -500,-499,\cdots,500 \right\}$.
 In the simulation, we also limit the delay time and the number of retransmissions by $n_{\text{d}}^{\max }=20$ and $n_{\text{r}}^{\max }=20$.
       In the implementation of the MDP algorithm, we set the discount factor as $\alpha=0.999$.

\begin{figure}[!t]
  \centering
  \includegraphics[width=3.7in]{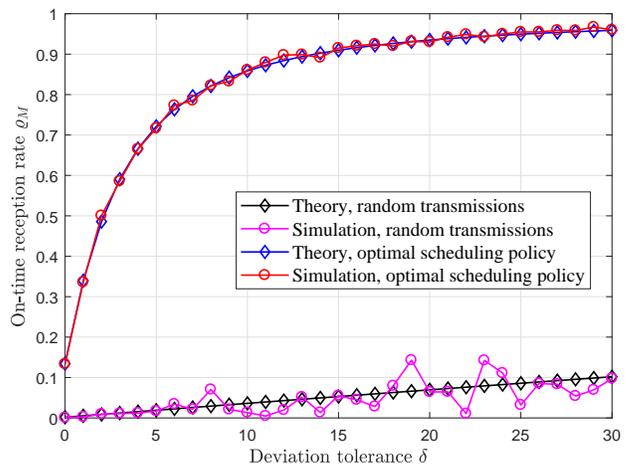}\\
  \caption{The on-time reception rate $\varrho_M$ versus the deviation tolerance $\delta$ ($p=0.2$, $M=10000$, and ${T}_{\text{tgt}}=5$).}
  \label{Receiving width}
\end{figure}

In Fig. \ref{Receiving width}, we investigate the behavior of the on-time reception rate $\varrho_M$ as a function of the deviation tolerance $\delta$.
    The probability of successful transmission is set to $p=0.2$, the number of packets is set to $M=10000$, and the target reception interval is set to $T_{\text{tgt}}=5$.
We observe that under the optimal packet scheduling policy obtained by Algorithm \ref{Alg: Optimal scheduling strategy algorithm}, the on-time reception rates  are much larger than that of the random transmission scheme.
    This shows that the proposed MDP based packet scheduling is very effective.
In case $\delta=0$, the $\delta$-on-time requirement reduces to the strictly on-time.
    From the figure, it can be seen that the corresponding on-time reception rates are relatively small, even though the optimal packet scheduling is used.
In fact, due to the fading property of the wireless channel, it is very difficult to alleviate the randomness of transmissions.
    Nevertheless, by using the optimal scheduling policy, the strictly on-time reception rate can be increased about 13\%, which is much larger than that of random transmissions.
Moreover, it is seen that our simulation results and theoretical results matches well.

Fig. \ref{the_interval_of_expected_receiving_time1} presents how the on-time reception  rate $\varrho_M$ changes with the target reception interval ${{T}_{\text{tgt}}}$.
    Besides the superiority of the optimal scheduling, we observe that the on-time reception rate increases with ${{T}_{\text{tgt}}}$.
This is because when  ${{T}_{\text{tgt}}}$ is relatively large, we have more freedom of scheduling.
    For the random transmissions, it is observed that $\varrho_M$ changes differently and reaches its maximum at ${{T}_{\text{tgt}}}=5$, which is exactly the expected value of the transmission time, i.e., $\mathbb{E}(S)={1}/{p}=5$.
This is in consistent with our intuitions that most of the transmission times fall into a finite range around their common expectation.

\begin{figure}[!t]
  \centering
  \includegraphics[width=3.7in]{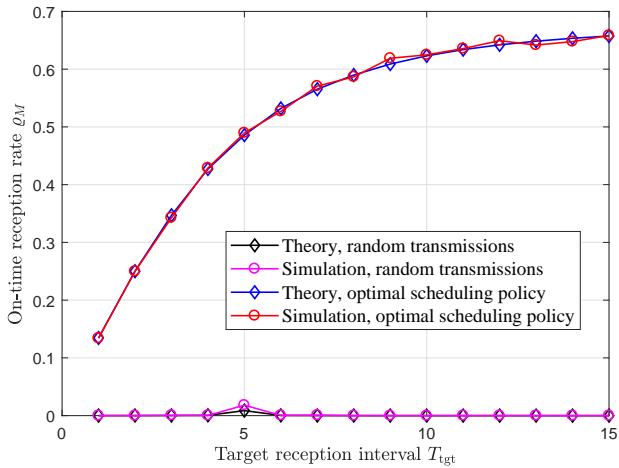}\\
  \caption{The on-time reception rate versus the target reception interval ${T}_{\text{tgt}}$ ($p=0.2$, $M=10000$, and $\delta=2$).}
  \label{the_interval_of_expected_receiving_time1}
\end{figure}

We plot the on-time reception rate $\varrho_M$ as a function of the successful reception probability $p$ (i.e., the reliability of the fading channel) in Fig. \ref{Receiving_probability2}, in which we set ${T}_{\text{tgt}}=4$, $M=10000$, and $\delta=2$.
     With the optimal scheduling policy, it is seen that $\varrho_M$ is increasing with $p$ and almost approaches the unity as $p$ reaches $0.5$.
Under the random transmission scheme, however, $\varrho_M$ does not change much as $p$ is increased.
    This is because when $p$ increases, although the variance (randomness) of the transmission time becomes smaller, the expected reception time of a packet deviates the target reception time more, unless ${T}_{\text{tgt}}={1}/{p}$ holds.

\begin{figure}[!t]
  \centering
  \includegraphics[width=3.7in]{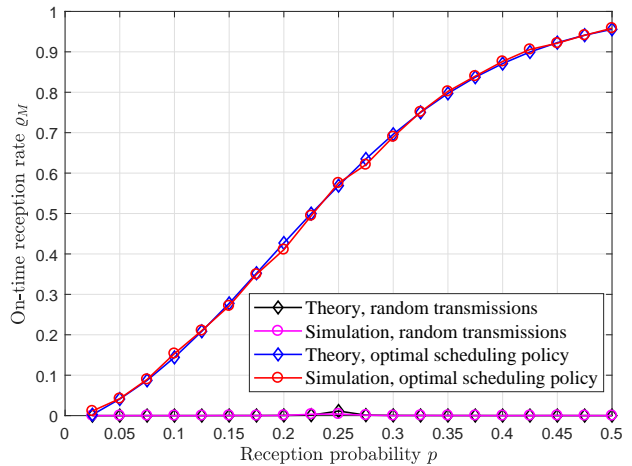}\\
  \caption{The on-time reception rate versus the reception probability $p$ (${T}_{\text{tgt}}=4$, $M=10000$, and $\delta=2$).}
  \label{Receiving_probability2}
\end{figure}

%
In Fig. \ref{CONVERGENCE_1}, we present how the on-time reception rate changes when the length $M$ of the packet sequence increases. For the random transmission strategy, we observe that the on-time reception rate decreases with $M$ and is expected to approach zero as $M$ goes to infinity. This is because the channel gains are random and difficult to predict while the accumulated deviation from the target times increases with $M$. For the transmission with optimal packet scheduling, it is seen that the on-time reception rate is much larger and converges to a constant as $M$ goes to infinity. By optimally scheduling the packets, however, the gain in the on-time reception rate is also limited, since the randomness of the channel cannot be removed completely.

\begin{figure}[!t]
  \centering
  \includegraphics[width=3.7in]{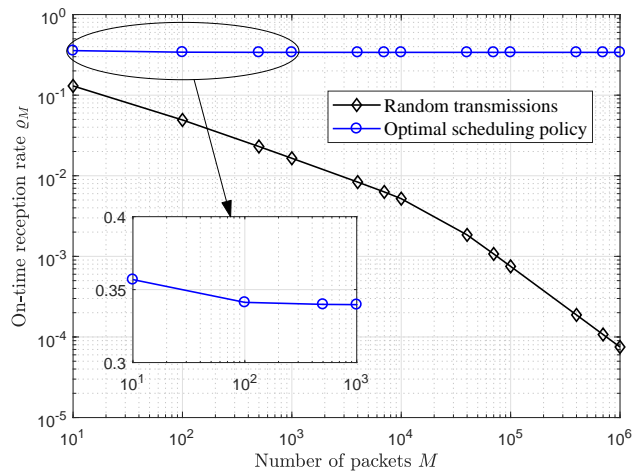}\\
  \caption{The on-time reception rate versus the number of packets $M$ ($p$=0.2, ${T}_{\text{tgt}}=5$ and $\delta=1$).}
  \label{CONVERGENCE_1}
\end{figure}

In Fig. \ref{pie}, we present how many packets are delayed, dropped, and repeated, respectively. As shown in the pie chart, we see that over 60\% of packets are repeated, which is because ${T}_{\text{tgt}}=33$ is relatively large.
    From the circle labeled curve, we also see that most of these packets are repeated by $5$ to $10$ times, since ${T}_{\text{tgt}}/{\mathbb{E}\left[ S  \right]}\;=p{{T}_{\text{tgt}}}=6.6$. Among the 36.3\% of delayed packets, most of them are delayed for $20$ slots, which is also because ${T}_\text{tgt}$ is large. In addition, only 1\% of packets need to be dropped in this setting.

\begin{figure}[!t]
  \centering
  \includegraphics[width=3.5in]{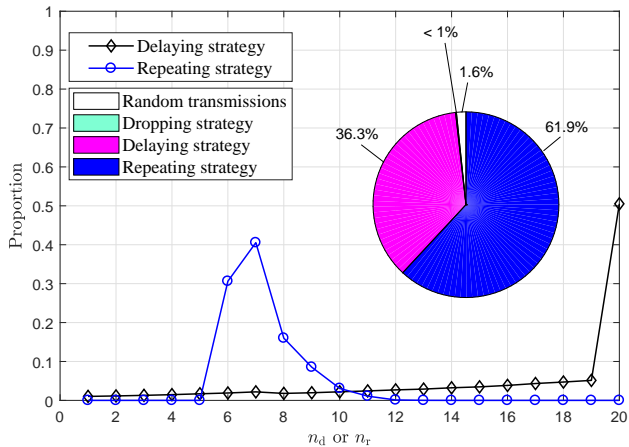}\\
  \caption{The proportions of the controlling strategies being used under the optimal scheduling policy ($p$=0.2, $M$=1000000, ${T}_{\text{tgt}}=33$ and $\delta=3$).}
  \label{pie}
\end{figure}

\section{Conclusion}\label{Conclusion}
In this paper, we have proposed an evaluation framework for the on-time communication over a fading channel. Due the fading property of the channel, the time to successfully deliver a packet is random so that it is very difficult to receive a packet in an expected slot. Thus, we increased the on-time reception rate of the packets significantly by optimally delaying, dropping, and repeating some of them. However, the improvement is also limited and the 100\% on-time transmission can never be achieved, unless the randomness in the channel gains can be completely removed (e.g., can be fully predicted). In our future work, we shall further combat the randomness of fading channels by power allocations, variable-rate compressions, and parallel transmissions. By  optimizing the $\delta$-on-time reception rate jointly, it is expected that TSN networks can be implemented over wireless networks in the near future.

\appendix
\subsection{Proof of Proposition \ref{prop:pxm}}
\begin{proof}\label{proof of corollary 1}
We denote the transmission time of the $m$-th packet as ${{\tau}_{m}}$ and the probability that the $m$-th packet is received $\delta$-on time as $P\left( {{x}_{m}} \right)$. For the $m$-th packet, we have
\begin{align}
P\left( {{x}_{m}} \right)=\Pr \left\{ m{{T}_{\text{tgt}}}-\delta \le \sum\limits_{k=1}^{m}{{{\tau}_{k}}}\le m{{T}_{\text{tgt}}}+\delta  \right\}.
\end{align}
Since transmission time ${\tau}_{k}$ follows the geometric distribution (cf. \eqref{eq:service time}) and the total transmission time $\sum_{k=1}^m{\tau}_k$ follows the negative binomial distribution with parameter $p$, we have
\begin{align}
\Pr \left\{ \sum\limits_{k=1}^{m}{{{\tau }_{k}}}=j \right\}=C_{j-1}^{m-1}{{p}^{m}}{{\left( 1-p \right)}^{j-m}},j=m,m+1,\ldots
\end{align}
To calculate $P\left( {{x}_{m}} \right)$, we consider the following two cases.
\subsubsection{$m{T}_{\rm{tgt}}\le m+\delta$}
Since ${{\tau}_{m}}\ge 1$, we have $\sum\limits_{k=1}^{m}{{{\tau}_{k}}}\ge m$, and $m{{T}_{\text{tgt}}}-\delta \le m\le \sum\limits_{k=1}^{m}{{{\tau }_{k}}}$. Thus,
\begin{align}
  P\left( {{x}_{m}} \right)& =\Pr \left\{ \sum\limits_{k=1}^{m}{{{\tau}_{k}}}\le m{T}_{\text{tgt}}+\delta  \right\}\nonumber \\
& ={{p}^{m}}+\ldots +C_{m{{T}_{\text{tgt}}}+\delta -1}^{m-1}{{p}^{m}}{{\left( 1-p \right)}^{m{{T}_{\text{tgt}}}+\delta -m}}\nonumber \\
& =\sum\limits_{k=m}^{m{{T}_{\text{tgt}}}+\delta }{C_{k-1}^{m-1}}{{p}^{m}}{{\left( 1-p \right)}^{k-m}}.
\end{align}
\subsubsection{$m{T}_{\rm{tgt}}>m+\delta$}
In this case, we have
\begin{align}
 P\left( {{x}_{m}} \right)& =\Pr \left\{ \sum\limits_{k=1}^{m}{{{\tau}_{k}}}\le m{{T}_{\text{tgt}}}+\delta  \right\}\nonumber \nonumber  \\
 &\quad\quad\quad\qquad\quad \ -\Pr \left\{ \sum\limits_{k=1}^{m}{{{\tau}_{k}}}\le m{{T}_{\text{tgt}}}-\delta -1 \right\}\nonumber \\
 & =\sum\limits_{k=m{{T}_{\text{tgt}}}-\delta }^{m{{T}_{\text{tgt}}}+\delta }{C_{k-1}^{m-1}{{p}^{m}}{{\left( 1-p \right)}^{k-m}}}.
\end{align}
This completes the proof of Proposition \ref{prop:pxm}.
\end{proof}
\subsection{Proof of Theorem \ref{classical}}
\begin{proof}\label{Proof of Theorem 1}
Let $P\left( {{x}_{k}} \right)$ and $P\left( \overline{{{x}_{k}}} \right)$ be the probability for the $k$-th packet to be and to be not received $\delta$-on time, respectively. We denote the probability that $k$ packets out of the $M$ packets are received $\delta$-on time as $P\left( x_{k}^{M} \right)$. Under this setting, We prove the theorem by mathematical induction.

We start from $M=1$ and readily see that $P\left( x_{1}^{1} \right)=P\left( {{x}_{1}} \right)$. We assume that Theorem \ref{classical} holds for $M=n$, i.e.,
\begin{align}\label{eq:Proof of classical probability}
\sum\limits_{k=1}^{M}{kP\left( x_{k}^{M} \right)}=\sum\limits_{k=1}^{M}{P\left( {{x}_{k}} \right)}.
\end{align}
That is,
\begin{align}\label{eq:c}
 & nP\left( {{x}_{1}}{{x}_{2}}\cdots {{x}_{n}} \right)\nonumber \\
 & +\left( n-1 \right)\left[ P\left( \overline{{{x}_{1}}}{{x}_{2}}\cdots {{x}_{n}} \right)+\ldots +P\left( {{x}_{1}}\cdots {{x}_{n-1}}\overline{{{x}_{n}}} \right) \right]+\nonumber  \\
 & \left( n-2 \right)\left[ P\left( \overline{{{x}_{1}}}\ \overline{{{x}_{2}}}{x}_{3}\cdots {{x}_{n}} \right)+\ldots+P\left( {{x}_{1}}\cdots {x}_{n-2}\overline{{{x}_{n-1}}}\ \overline{{{x}_{n}}} \right) \right]\nonumber  \\
 & +\ldots +\left[ P\left( {{x}_{1}}\overline{{{x}_{2}}}\cdots \overline{{{x}_{n}}} \right)+\ldots +P\left( \overline{{{x}_{1}}}\ \overline{{{x}_{2}}}\cdots {{x}_{n}} \right) \right]\nonumber  \\
 & =P\left( {{x}_{1}} \right)+P\left( {{x}_{2}} \right)+\ldots +P\left( {{x}_{n}} \right).
\end{align}
For $M=n+1$, we then have
\begin{align}\label{eq:Proof of classical probability litter}
  & \sum\limits_{k=1}^{n+1}{kP\left(x_{k}^{n+1} \right)}\nonumber \\
  &\stackrel{(a)}{=}\left( n+1 \right)P\left( {{x}_{1}}{{x}_{2}}\cdots {{x}_{n}}{{x}_{n+1}} \right)\nonumber \\
  & +n\left[ P\left( \overline{{{x}_{1}}}{{x}_{2}}\cdots {{x}_{n}}{{x}_{n+1}} \right)+\ldots +P\left( {{x}_{1}}{{x}_{2}}\cdots {{x}_{n}}\ \overline{{{x}_{n+1}}} \right) \right]\nonumber \\
  & +\left( n-1 \right)\left[ P\left( \overline{{{x}_{1}}}\ \overline{{{x}_{2}}}\cdots {{x}_{n+1}} \right)+\ldots +P\left( {{x}_{1}}\cdots \overline{{{x}_{n}}}\ \overline{{{x}_{n+1}}} \right) \right]\nonumber \\
  & +\ldots \left[ P\left( {{x}_{1}}\overline{{{x}_{2}}}\ \overline{{{x}_{3}}}\cdots \overline{{{x}_{n+1}}} \right)+\ldots +P\left( \overline{{{x}_{1}}}\ \overline{{{x}_{2}}}\cdots {{x}_{n+1}} \right) \right]\nonumber \\
  &\stackrel{(b)}{=}nP\left( {{x}_{1}}{{x}_{2}}\cdots {{x}_{n}} \right)\nonumber \\
  &+\left( n-1 \right)\left[ P\left( \overline{{{x}_{1}}}{{x}_{2}}\cdots {{x}_{n}} \right)+\ldots +P\left( {{x}_{1}}{{x}_{2}}\cdots \overline{{{x}_{n}}} \right) \right]\nonumber  \\
  &+\left( n-2 \right)\left[ P\left( \overline{{{x}_{1}}}\ \overline{{{x}_{2}}}\cdots {{x}_{n}} \right)+\ldots +P\left( {{x}_{1}}{{x}_{2}}\cdots \overline{{{x}_{n-1}}}\ \overline{{{x}_{n}}} \right) \right]\nonumber  \\
  &+\ldots \left[ P\left( {{x}_{1}}\overline{{{x}_{2}}}\ \overline{{{x}_{3}}}\cdots \overline{{{x}_{n}}} \right)+\ldots +P\left( \overline{{{x}_{1}}}\ \overline{{{x}_{2}}}\cdots {{x}_{n}} \right) \right]\nonumber  \\
  &+\left[ P\left( {{x}_{1}}{{x}_{2}}\cdots {{x}_{n+1}} \right)+\ldots +P\left( \overline{{{x}_{1}}}\ \overline{{{x}_{2}}}\ \overline{{{x}_{3}}}\cdots {{x}_{n+1}} \right) \right]\nonumber \\
  &\stackrel{(c)}{=}P\left( {{x}_{1}} \right)+P\left( {{x}_{2}} \right)+\ldots +P\left( {{x}_{n+1}} \right)\nonumber  \\
  & =\sum\limits_{k=1}^{n+1}{P\left( {{x}_{k}} \right)},
\end{align}
in which $\left( c \right)$ follows from \eqref{eq:c} and $\left( b \right)$ is obtained by re-organizing the equation $\left( a \right)$. For example, the first term of \eqref{eq:Proof of classical probability litter} can be calculated by
\begin{align}
& \left( n+1 \right)P\left( {{x}_{1}}\cdots{{x}_{n+1}} \right)+nP\left( {{x}_{1}}\cdots {{x}_{n}}\overline{{{x}_{n+1}}} \right) \nonumber \\
& =nP\left( {{x}_{1}}{{x}_{2}}\cdots {{x}_{n}} \right)+P\left( {{x}_{1}}{{x}_{2}}\cdots {{x}_{n}}{{x}_{n+1}} \right).
\end{align}
Thus, \eqref{eq:expectation} holds true for $M=n+1$, and thus holds for all $M\ge1$ and the proof of Theorem \ref{classical} is completed.
\end{proof}
As an illustrative example, we have
\begin{align}
 & \sum\limits_{k=1}^{3}{kP\left( x_{k}^{3} \right)}\nonumber \\
 & =3P\left( {{x}_{1}}{{x}_{2}}{{x}_{3}} \right)\nonumber \\
 & +2\left[ P\left( {{x}_{1}}{{x}_{2}}\overline{{{x}_{3}}} \right)+P\left( {{x}_{1}}\overline{{{x}_{2}}}{{x}_{3}} \right)+P\left( \overline{{{x}_{1}}}{{x}_{2}}{{x}_{3}} \right) \right]\nonumber \\
 & +P\left( {{x}_{1}}\overline{{{x}_{2}}}\ \overline{{{x}_{3}}} \right)+P\left( \overline{{{x}_{1}}}\ \overline{{{x}_{2}}}{{x}_{3}} \right)+P\left( \overline{{{x}_{1}}}{{x}_{2}}\overline{{{x}_{3}}} \right)\nonumber \\
 & =2P\left( {{x}_{1}}{{x}_{2}} \right)+P\left( {{x}_{1}}\overline{{{x}_{2}}} \right)+P\left( \overline{{{x}_{1}}}{{x}_{2}} \right)\nonumber \\
 & +\left[ P\left( {{x}_{1}}{{x}_{2}}{{x}_{3}} \right)+P\left( {{x}_{1}}\overline{{{x}_{2}}}{{x}_{3}} \right)+P\left( \overline{{{x}_{1}}}{{x}_{2}}{{x}_{3}} \right)+P\left( \overline{{{x}_{1}}}\ \overline{{{x}_{2}}}{{x}_{3}} \right) \right]\nonumber \\
 & =P\left( {{x}_{1}} \right)+P\left( {{x}_{2}} \right)+P\left( {{x}_{3}} \right)\nonumber \\
 & =\sum\limits_{k=1}^{3}{P\left( {{x}_{k}} \right)}
\end{align}
\subsection{Proof of Proposition \ref{Probability distribution function of repetition strategy}}
\begin{proof}\label{proof of corollary 2}
Before proving Proposition \ref{Probability distribution function of repetition strategy}, we first prove the following equation
\begin{align}\label{eq:Combination number}
\sum\limits_{{{y}_{1}}=1}^{z-m}{\sum\limits_{{{y}_{2}}=1}^{z-{{y}_{1}}-\left( m-1 \right)}{\ldots \sum\limits_{{{y}_{m}}=1}^{z-\sum\limits_{k=1}^{m-1}{{{y}_{k}}}-1}{1}}}=C_{z-1}^{m},
\end{align}
in which $z\ge m+1$, $m\ge 1$.

We prove \eqref{eq:Combination number} using the mathematical induction. For $m=1$, it is clear that the equation is true. Suppose \eqref{eq:Combination number} holds for $m=n$ and we have
\begin{align}
\sum\limits_{{{y}_{1}}=1}^{z-n}{\sum\limits_{{{y}_{2}}=1}^{z-{{y}_{1}}-\left( n-1 \right)}{\ldots \sum\limits_{{{y}_{n}}=1}^{z-\sum\limits_{k=1}^{n-1}{{{y}_{k}}}-1}{1}}}=C_{z-1}^{n}.
\end{align}
For $m=n+1$, we then have
\begin{align}
  & \sum\limits_{{{y}_{1}}=1}^{z-\left( n+1 \right)}{\sum\limits_{{{y}_{2}}=1}^{z-{{y}_{1}}-n}{\ldots \sum\limits_{{{y}_{n}}=1}^{z-\sum\limits_{k=1}^{n-1}{{{y}_{k}}}-2}{\sum\limits_{{{y}_{n+1}}=1}^{z-\sum\limits_{k=1}^{n}{{{y}_{k}}}-1}{1}}}}\nonumber \\
  & =\sum\limits_{{{y}_{1}}=1}^{z-n-1}{\left( \sum\limits_{{{y}_{2}}=1}^{z-{{y}_{1}}-n}{\ldots \sum\limits_{{{y}_{n}}=1}^{z-\sum\limits_{k=1}^{n-1}{{{y}_{k}}}-2}{\sum\limits_{{{y}_{n+1}}=1}^{z-\sum\limits_{k=1}^{n}{{{y}_{k}}}-1}{1}}} \right)}\nonumber \\
  & =\sum\limits_{y=1}^{z-n-1}{\left( \sum\limits_{{{y}_{1}}=1}^{z-y-n}{\ldots \sum\limits_{{{y}_{n}}=1}^{z-y-\sum\limits_{k=1}^{n-1}{{{y}_{k}}}-1}{1}} \right)}\nonumber \\
  & =\sum\limits_{y=1}^{z-n-1}{C_{z-y-1}^{n}}\nonumber \\
  & =C_{z-2}^{n}+C_{z-3}^{n}+\ldots +C_{n}^{n}\nonumber \\
  & =C_{z-2}^{n}+C_{z-2}^{n+1}-C_{z-3}^{n+1}+C_{z-3}^{n+1}\ldots -C_{n+1}^{n+1}+C_{n}^{n}\nonumber \\
  & =C_{z-2}^{n}+C_{z-2}^{n+1}\nonumber \\
  & =C_{z-1}^{n+1}.
\end{align}
That is, \eqref{eq:Combination number} also holds for $m=n+1$. Therefore, \eqref{eq:Combination number} holds for all $m\ge 1$.

We denote the transmission time of a packet under the repeat strategy and at most $n_\text{r}$ retransmissions as ${S }_{n_\text{r}}$, the transmission time of the $i$-th retransmission as ${s}_{\text{r},i}$, the transmission time of the first transmission as ${s}_{\text{r},0}$. When a packet is received before the target reception range, the packet will be retransmitted and we have
\begin{align}\label{Proof method and formula of probability distribution}
  & \Pr \left\{ {{S}_{n_\text{r}}}-{{T}_{\text{tgt}}}>j \right\}\nonumber \\
  & =\Pr \left\{ \sum\limits_{i=0}^{n_\text{r}}{{{s}_{\text{r},i}}-{T}_{\text{tgt}}>j,\sum\limits_{i=0}^{n_\text{r}-1}{{{s}_{\text{r},i}}}<{T}_{\text{tgt}}-\delta } \right\}\nonumber  \\
  &\quad+\Pr \left\{ \sum\limits_{i=0}^{n_\text{r}-1}{{s}_{\text{r},i}}-{T}_{\text{tgt}}>j,\sum\limits_{i=0}^{n_\text{r}-2}{{{s}_{\text{r},i}}}<{{T}_{\text{tgt}}}-\delta \le \sum\limits_{i=0}^{n_\text{r}-1}{{{s}_{\text{r},i}}}  \right\}\nonumber \\
  &\quad+\ldots +\Pr \left\{ {{s}_{0}}-{T}_{\text{tgt}}>j,{{s}_{0}}\ge {T}_{\text{tgt}}-\delta  \right\}.
\end{align}
\subsubsection{${T}_{\rm{tgt}}\ge 1+n_\text{r}+\delta$}
When $j\ge -1-\delta$, we have $j+{T}_{\text{tgt}}\ge {T}_{\text{tgt}}-\delta -1$ and
\begin{align}
  & \Pr \left\{ {{S}_{n_\text{r}}}-{T}_{\text{tgt}}>j \right\}\nonumber  \\
  & =\Pr \left\{ \sum\limits_{i=0}^{n_\text{r}}{{{s}_{\text{r},i}}-{T}_{\text{tgt}}>j,\sum\limits_{i=0}^{n_\text{r}-1}{{{s}_{\text{r},i}}}<{T}_{\text{tgt}}-\delta } \right\}\nonumber  \\
  & \quad\quad\quad\quad+\Pr \left\{ \sum\limits_{i=0}^{n_\text{r}-1}{{{s}_{\text{r},i}}}-{T}_{\text{tgt}}>j,\sum\limits_{i=0}^{n_\text{r}-2}{{{s}_{\text{r},i}}}<{T}_{\text{tgt}}-\delta  \right\}\nonumber  \\
  & \quad\quad\quad\quad+\ldots +\Pr \left\{ {{s}_{0}}-{T}_{\text{tgt}}>j \right\}\nonumber  \\
  & =\Pr \left\{ \sum\limits_{i=0}^{n_\text{r}}{{{s}_{\text{r},i}}-{T}_{\text{tgt}}>j,\sum\limits_{i=0}^{n_\text{r}-1}{{{s}_{\text{r},i}}}<{T}_{\text{tgt}}-\delta } \right\}\nonumber  \\
  & \quad\quad\quad\quad+\Pr \left\{ {{S}_{n_\text{r}-1}}-{T}_{\text{tgt}}>j \right\}\nonumber  \\
  & =\sum\limits_{{{y}_{0}}=1}^{{T}_{\text{tgt}}-\delta -n_\text{r}}\sum\limits_{{{y}_{1}}=1}^{{T}_{\text{tgt}}-\delta -{{y}_{0}}- n_\text{r}+1}{\cdots }\sum\limits_{{{y}_{n_\text{r}-1}}=1}^{{T}_{\text{tgt}}-\delta -\sum\limits_{i=0}^{n_\text{r}-2}{{{y}_{i}}}-1}V\nonumber  \\
  & \quad\quad\quad\quad+\Pr \left\{ {{S}_{n_\text{r}-1}}-{T}_{\text{tgt}}>j \right\},
\end{align}
in which
\begin{align}\label{eq:V}
  V& ={{p}^{n_\text{r}}}{{\left( 1-p \right)}^{\sum\limits_{i=0}^{n_\text{r}-1}{{{y}_{i}}}-n_\text{r}}}\nonumber  \\
   &\qquad\qquad\qquad \cdot \Pr \left\{ {{s}_{\text{r},n_\text{r}}}>j+{T}_{\text{tgt}}-\sum\limits_{i=0}^{n_\text{r}-1}{{{y}_{i}}},{{s}_{\text{r},i}}={{y}_{i}} \right\}\nonumber  \\
   & =\left\{ \begin{aligned}
   & {{p}^{n_\text{r}}}{{\left( 1-p \right)}^{j+{{T}_{\text{tgt}}}-n_\text{r}}},&& j\ge \sum\limits_{i=0}^{n_\text{r}-1}{{{y}_{i}}}-{T}_{\text{tgt}} \\
   & {{p}^{n_\text{r}}}{{\left( 1-p \right)}^{\sum\limits_{i=0}^{n_\text{r}-1}{{{y}_{i}}}-n_\text{r}}},&& j<\sum\limits_{i=0}^{n_\text{r}-1}{{{y}_{i}}}-{T}_{\text{tgt}}. \\
\end{aligned} \right.
\end{align}
Since $j\ge -1-\delta$, $\sum\limits_{i=0}^{n_\text{r}-1}{{{s}_{\text{r},i}}}=\sum\limits_{i=1}^{n_\text{r}-1}{{{y}_{i}}}<{T}_{\text{tgt}}-\delta$, we have $j\ge \sum\limits_{i=0}^{n_\text{r}-1}{{{y}_{i}}}-{T}_{\text{tgt}}$, and then $V={{p}^{n_\text{r}}}{{\left( 1-p \right)}^{j+{{T}_{\text{tgt}}}-n_\text{r}}}$. From \eqref{eq:Combination number}, we have
\begin{align}
  & \Pr \left\{ {{S}_{n_\text{r}}}-{T}_{\text{tgt}}>j \right\}\nonumber  \\
  & =\sum\limits_{{{y}_{0}}=1}^{{T}_{\text{tgt}}-\delta -n_\text{r}}{\sum\limits_{{{y}_{1}}=1}^{{T}_{\text{tgt}}-\delta -{{y}_{0}}-\left( n_\text{r}-1 \right)}{\cdots }\sum\limits_{{{y}_{n_\text{r}-1}}=1}^{{T}_{\text{tgt}}-\delta -\sum\limits_{i=0}^{n_\text{r}-2}{{{y}_{i}}}-1}V}\nonumber  \\
  &\quad\ \ +\Pr \left\{ {{S}_{n_\text{r}-1}}-{T}_{\text{tgt}}>j \right\},\nonumber  \\
  & =C_{{T}_{\text{tgt}}-\delta -1}^{n_\text{r}}{{p}^{n_\text{r}}}{{\left( 1-p \right)}^{j+{T}_{\text{tgt}}-n_\text{r}}}+\Pr \left\{ {{S}_{n_\text{r}-1}}-{T}_{\text{tgt}}>j \right\}\nonumber  \\
  & =C_{{T}_{\text{tgt}}-\delta -1}^{n_\text{r}}{{p}^{n_\text{r}}}{{\left( 1-p \right)}^{j+{T}_{\text{tgt}}-n_\text{r}}}\nonumber \\
  & \quad\ \ +C_{{T}_{\text{tgt}}-\delta -1}^{n_\text{r}-1}{{p}^{n_\text{r}-1}}{{\left( 1-p \right)}^{j+{T}_{\text{tgt}}-n_\text{r}+1}}+\ldots +{{\left( 1-p \right)}^{j+{T}_{\text{tgt}}}}\nonumber  \\
  & =\sum\limits_{i=0}^{n_\text{r}}{C_{{T}_{\text{tgt}}-\delta -1}^{i}{{p}^{i}}{{\left( 1-p \right)}^{j+{T}_{\text{tgt}}-i}}}.
\end{align}

In case $n_\text{r}-{T}_{\text{tgt}}\le j<-1-\delta$, we have \eqref{eq:appendix_kualan_0} from $\eqref{Proof method and formula of probability distribution}$, and we have \eqref{eq:appendix_kualan_1} from $\eqref{eq:V}$.
\begin{figure*}[b]
\hrule
{\small
\begin{align}\label{eq:appendix_kualan_0}
  & \Pr \left\{ {{S}_{n_\text{r}}}-{{T}_{\text{tgt}}}>j \right\}\nonumber \\
 & =\Pr \left\{ \sum\limits_{i=0}^{n_\text{r}}{{{s}_{\text{r},i}}}-{{T}_{\text{tgt}}}>j,\sum\limits_{i=0}^{n_\text{r}-1}{{{s}_{\text{r},i}}}<{{T}_{\text{tgt}}}-\delta  \right\}+\Pr \left\{ \sum\limits_{i=0}^{n_\text{r}-1}{{{s}_{\text{r},i}}}\ge {{T}_{\text{tgt}}}-\delta ,\sum\limits_{i=0}^{n_\text{r}-2}{{{s}_{\text{r},i}}}<{{T}_{\text{tgt}}}-\delta  \right\}+\ldots +\Pr \left\{ {{s}_{\text{r},0}}\ge {{T}_{\text{tgt}}}-\delta  \right\}\nonumber \\
 & =\Pr \left\{ \sum\limits_{i=0}^{n_\text{r}}{{{s}_{\text{r},i}}}-j-\delta >{{T}_{\text{tgt}}}-\delta ,\sum\limits_{i=0}^{n_\text{r}-1}{{{s}_{\text{r},i}}}<{{T}_{\text{tgt}}}-\delta  \right\}+\Pr \left\{ \sum\limits_{i=0}^{n_\text{r}-1}{{{s}_{\text{r},i}}}\ge {{T}_{\text{tgt}}}-\delta ,\sum\limits_{i=0}^{n_\text{r}-2}{{{s}_{\text{r},i}}}<{{T}_{\text{tgt}}}-\delta  \right\}\nonumber \\
 & \qquad\qquad\qquad\qquad\qquad\qquad\qquad\qquad\qquad\qquad\qquad\qquad\qquad\qquad\qquad\qquad\qquad\qquad\quad\ \ \ \ \ \ \ \ +\ldots +\Pr \left\{ {{s}_{\text{r},0}}\ge {{T}_{\text{tgt}}}-\delta  \right\}\nonumber \\
 & =\Pr \left\{ \sum\limits_{i=0}^{n_\text{r}}{{{s}_{\text{r},i}}}-j-\delta >{{T}_{\text{tgt}}}-\delta  \right\}\nonumber \\
 & =\Pr \left\{ \sum\limits_{i=0}^{n_\text{r}}{{{s}_{\text{r},i}}}-{{T}_{\text{tgt}}}>j \right\}.
\end{align}
}
\end{figure*}
Finally, we have \eqref{eq:appendix_kualan_2}.

\begin{figure*}[b]
\hrule
\begin{align}\label{eq:appendix_kualan_1}
  &\Pr \left\{ \sum\limits_{i=0}^{n_\text{r}}{{{s}_{\text{r},i}}-{T}_{\text{tgt}}>j,\sum\limits_{i=0}^{n_\text{r}-1}{{{s}_{\text{r},i}}}<{T}_{\text{tgt}}-\delta } \right\}\nonumber  \\
  & =  \Pr \left\{ \sum\limits_{i=0}^{n_\text{r}}{{{s}_{\text{r},i}}-{T}_{\text{tgt}}>j,\sum\limits_{i=0}^{n_\text{r}-1}{{{s}_{\text{r},i}}}<{T}_{\text{tgt}}-\delta ,j\ge \sum\limits_{i=0}^{n_\text{r}-1}{{{s}_{\text{r},i}}}-{T}_{\text{tgt}}} \right\}\nonumber \\
  & +\Pr \left\{ \sum\limits_{i=0}^{n_\text{r}}{{{s}_{\text{r},i}}-{T}_{\text{tgt}}>j,\sum\limits_{i=0}^{n_\text{r}-1}{{{s}_{\text{r},i}}}<{T}_{\text{tgt}}-\delta ,j<\sum\limits_{i=0}^{n_\text{r}-1}{{{s}_{\text{r},i}}}-{T}_{\text{tgt}}} \right\}\nonumber  \\
  & =\Pr \left\{ \sum\limits_{i=0}^{n_\text{r}}{{{s}_{\text{r},i}}-{T}_{\text{tgt}}>j,j\ge \sum\limits_{i=0}^{n_\text{r}-1}{{{s}_{\text{r},i}}}-{T}_{\text{tgt}}} \right\}
   +\Pr \left\{ \sum\limits_{i=0}^{n_\text{r}-1}{{{s}_{\text{r},i}}}<{T}_{\text{tgt}}-\delta ,j<\sum\limits_{i=0}^{n_\text{r}-1}{{{s}_{\text{r},i}}}-{T}_{\text{tgt}} \right\}\nonumber  \\
  & = \sum\limits_{{{y}_{0}}=1}^{j+{T}_{\text{tgt}}-n_\text{r}+1}{\sum\limits_{{{y}_{1}}=1}^{j+{T}_{\text{tgt}}-{{y}_{0}}-n_\text{r}+2}{\cdots }\sum\limits_{{{y}_{n_\text{r}-1}}=1}^{j+{T}_{\text{tgt}}-\sum\limits_{i=0}^{n_\text{r}-2}{{{y}_{i}}}}V}
   +\Pr \left\{ \sum\limits_{i=0}^{n_\text{r}-1}{{{s}_{\text{r},i}}}<{T}_{\text{tgt}}-\delta ,j<\sum\limits_{i=0}^{n_\text{r}-1}{{{s}_{\text{r},i}}}-{T}_{\text{tgt}} \right\}\nonumber \\
  & = \sum\limits_{{{y}_{0}}=1}^{j+{T}_{\text{tgt}}-\left( n_\text{r}-1 \right)}{\sum\limits_{{{y}_{1}}=1}^{j+{T}_{\text{tgt}}-{{y}_{0}}-\left( n_\text{r}-2 \right)}{\cdots }\sum\limits_{{{y}_{n_\text{r}-1}}=1}^{j+{T}_{\text{tgt}}-\sum\limits_{i=0}^{n_\text{r}-2}{{{y}_{i}}}}{{{p}^{n_\text{r}}}{{\left( 1-p \right)}^{j+{T}_{\text{tgt}}-n_\text{r}}}}}
   +\Pr \left\{ \sum\limits_{i=0}^{n_\text{r}-1}{{{s}_{\text{r},i}}}<{T}_{\text{tgt}}-\delta ,j<\sum\limits_{i=0}^{n_\text{r}-1}{{{s}_{\text{r},i}}}-{T}_{\text{tgt}} \right\}\nonumber  \\
  & =C_{j+{T}_{\text{tgt}}}^{n_\text{r}}{{p}^{n_\text{r}}}{{\left( 1-p \right)}^{j+{T}_{\text{tgt}}-n_\text{r}}}
   +\Pr \left\{ \sum\limits_{i=0}^{n_\text{r}-1}{{{s}_{\text{r},i}}}<{T}_{\text{tgt}}-\delta  \right\}-\Pr \left\{ \sum\limits_{i=0}^{n_\text{r}-1}{{{s}_{\text{r},i}}}-{T}_{\text{tgt}}\le j \right\}.
\end{align}
\end{figure*}
\begin{figure*}
\hrule
\begin{align}\label{eq:appendix_kualan_2}
  & \Pr \left\{ {{S}_{n_\text{r}}}-{T}_{\text{tgt}}>j \right\}\nonumber  \\
  & =\Pr \left\{ \sum\limits_{i=0}^{n_\text{r}}{{s}_{\text{r},i}}-{T}_{\text{tgt}}>j,\sum\limits_{i=0}^{n_\text{r}-1}{{s}_{\text{r},i}}<{T}_{\text{tgt}}-\delta  \right\}+
    \Pr \left\{ \sum\limits_{i=0}^{n_\text{r}-1}{{s}_{\text{r},i}}\ge {T}_{\text{tgt}}-\delta ,\sum\limits_{i=0}^{n_\text{r}-2}{{s}_{\text{r},i}}<{T}_{\text{tgt}}-\delta  \right\}
    +\ldots +\Pr \left\{ {s}_{\text{r},0}\ge {T}_{\text{tgt}}-\delta  \right\}\nonumber  \\\nonumber  \\
  & =C_{j+{T}_{\text{tgt}}}^{n_\text{r}}{{p}^{n_\text{r}}}{{\left( 1-p \right)}^{j+{T}_{\text{tgt}}-n_\text{r}}}
    +\Pr \left\{ \sum\limits_{i=0}^{n_\text{r}-1}{{s}_{\text{r},i}}<{T}_{\text{tgt}}-\delta  \right\}-\Pr \left\{ \sum\limits_{i=0}^{n_\text{r}-1}{{s}_{\text{r},i}}-{T}_{\text{tgt}}\le j \right\}\nonumber \\
  &\quad +\Pr \left\{ \sum\limits_{i=0}^{n_\text{r}-1}{{s}_{\text{r},i}}\ge {T}_{\text{tgt}}-\delta ,\sum\limits_{i=0}^{n_\text{r}-2}{{s}_{\text{r},i}}<{T}_{\text{tgt}}-\delta  \right\}+\ldots +\Pr \left\{ {s}_{\text{r},0}\ge {T}_{\text{tgt}}-\delta  \right\}\nonumber  \\
  & =C_{j+{T}_{\text{tgt}}}^{n_\text{r}}{{p}^{n_\text{r}}}{{\left( 1-p \right)}^{j+{T}_{\text{tgt}}-n_\text{r}}}
    +1-\Pr \left\{ \sum\limits_{i=0}^{n_\text{r}-1}{{s}_{\text{r},i}}-{T}_{\text{tgt}}\le j \right\}\nonumber  \\
  & =C_{j+{T}_{\text{tgt}}}^{n_\text{r}}{{p}^{n_\text{r}}}{{\left( 1-p \right)}^{j+{T}_{\text{tgt}}-n_\text{r}}}+\Pr \left\{ \sum\limits_{i=0}^{n_\text{r}-1}{{s}_{\text{r},i}}-{{T}_{\text{tgt}}}>j \right\}\nonumber  \\
  & =C_{j+{T}_{\text{tgt}}}^{n_\text{r}}{{p}^{n_\text{r}}}{{\left( 1-p \right)}^{j+{T}_{\text{tgt}}-n_\text{r}}}+\Pr \left\{ {{S}_{n_\text{r}-1}}-{{T}_{\text{tgt}}}>j \right\}\nonumber  \\
  & =C_{j+{T}_{\text{tgt}}}^{n_\text{r}}{{p}^{n_\text{r}}}{{\left( 1-p \right)}^{j+{T}_{\text{tgt}}-n_\text{r}}}
    +C_{j+{T}_{\text{tgt}}}^{n_\text{r}-1}{{p}^{n_\text{r}-1}}{{\left( 1-p \right)}^{j+{T}_{\text{tgt}}-n_\text{r}+1}}+\ldots +{{\left( 1-p \right)}^{j+{T}_{\text{tgt}}}}\nonumber  \\
  & =\sum\limits_{i=0}^{n_\text{r}}{C_{j+{T}_{\text{tgt}}}^{i}{{p}^{i}}{{\left( 1-p \right)}^{j+{T}_{\text{tgt}}-i}}}.
\end{align}
\end{figure*}

In case $j<n_\text{r}-{T}_{\text{tgt}}$, we have $j+{T}_{\text{tgt}}<n_\text{r}\le {T}_{\text{tgt}}-\delta -1$. From $\eqref{eq:V}$, we obtain
\begin{align}
  & \Pr \left\{ {{S}_{n_\text{r}}}-{T}_{\text{tgt}}>j \right\}\nonumber \\
  & =\Pr \left\{ \sum\limits_{i=0}^{n_\text{r}-1}{{{s}_{\text{r},i}}}<{T}_{\text{tgt}}-\delta  \right\} \nonumber  \\
  &\quad\quad\quad\quad\quad +\Pr \left\{ \sum\limits_{i=0}^{n_\text{r}-1}{{{s}_{\text{r},i}}}\ge {T}_{\text{tgt}}-\delta ,\sum\limits_{i=0}^{n_\text{r}-2}{{{s}_{\text{r},i}}}<{T}_{\text{tgt}}-\delta  \right\} \nonumber  \\
  &\quad\quad\quad\quad\quad +\ldots +\Pr \left\{ {{s}_{\text{r},0}}\ge {T}_{\text{tgt}}-\delta  \right\} \nonumber  \\
  & =1.
\end{align}
Therefore, the probability distribution function of the packet under the condition of ${T}_{\text{tgt}}\ge 1+n_\text{r}+\delta$ is given by
\begin{align}\label{eq:Primary formula of repetition strategy}
   &\Pr \left\{{{S}_{n_\text{r}}}-{T}_{\text{tgt}}>j \right\}\nonumber \\
   &=\left\{ {
   \begin{aligned}
      &1,&& j<n_\text{r}-{T}_{\text{tgt}} \\
      &\sum\limits_{i=0}^{n_\text{r}}C_{{T}_{\text{tgt}}-\delta -1}^{i}{{p}^{i}}{{\left( 1-p \right)}^{j+{T}_{\text{tgt}}-i}},&& j\ge -1-\delta  \\
      &\sum\limits_{i=0}^{n_\text{r}}C_{j+{T}_{\text{tgt}}}^{i}{{p}^{i}}{{\left( 1-p \right)}^{j+{T}_{\text{tgt}}-i}},&& \text{else}.  \\
  \end{aligned}} \right.
\end{align}
\subsubsection{${T}_{\rm{tgt}}\le1+\delta$}
 In case ${T}_{\text{tgt}}\le1+\delta$, the transmission starting time of the packet falls within the target reception range. Thus the packet will not be received before the target reception range and the packet will only be transmitted at most once. We have ${T}_{\text{tgt}}-\delta \le1\le {{s}_{\text{r},0}}\le \sum\limits_{i=0}^{n_\text{r}}{{{s}_{\text{r},i}}}$. From $\eqref{Proof method and formula of probability distribution}$, we then have
\begin{align}
  \Pr \left\{ {{S}_{n_\text{r}}}-{T}_{\text{tgt}}>j \right\}&=\Pr \left\{ {{s}_{\text{r},0}}-{T}_{\text{tgt}}>j \right\} \nonumber  \\
  &=\left\{
  \begin{aligned}
  & {{\left( 1-p \right)}^{j+{T}_{\text{tgt}}}} ,&& j\ge -{T}_{\text{tgt}} \\
  & 1,&& j<-{T}_{\text{tgt}}. \\
  \end{aligned}\right.
\end{align}
\subsubsection{$1+\delta< {T}_{\rm{tgt}}<1+n_\text{r}+\delta$}
We denote $Q={T}_{\text{tgt}}-\delta -1$. Since $n_\text{r}>{T}_{\text{tgt}}-\delta -1$, we can get \eqref{eq:appendix_kualan_3}.
\begin{figure*}
\hrule
\begin{align}\label{eq:appendix_kualan_3}
 & \Pr \left\{ {{S}_{n_\text{r}}}-{T}_{\text{tgt}}>j \right\}\nonumber  \\
 & =\Pr \left\{ \sum\limits_{i=0}^{n_\text{r}}{{s}_{\text{r},i}}-{T}_{\text{tgt}}>j,\sum\limits_{i=0}^{n_\text{r}-1}{{s}_{\text{r},i}}<{T}_{\text{tgt}}-\delta  \right\}+\ldots
   +\Pr \left\{ \sum\limits_{i=0}^{Q}{{s}_{\text{r},i}}-{T}_{\text{tgt}}>j,\sum\limits_{i=0}^{Q}{{s}_{\text{r},i}}\ge {T}_{\text{tgt}}-\delta ,\sum\limits_{i=0}^{Q -1}{{s}_{\text{r},i}}<{T}_{\text{tgt}}-\delta  \right\}\nonumber  \\
 & \qquad +\ldots +\Pr \left\{ {s}_{\text{r},0}-{T}_{\text{tgt}}>j,{s}_{\text{r},0}\ge {T}_{\text{tgt}}-\delta  \right\}\nonumber  \\
 & =\Pr \left\{ \sum\limits_{i=0}^{Q}{{s}_{\text{r},i}}-{T}_{\text{tgt}}>j,\sum\limits_{i=0}^{Q}{{s}_{\text{r},i}}\ge {T}_{\text{tgt}}-\delta ,\sum\limits_{i=0}^{Q -1}{{s}_{\text{r},i}}<{T}_{\text{tgt}}-\delta  \right\}
   +\ldots +\Pr \left\{ {s}_{\text{r},i}-{T}_{\text{tgt}}>j,{s}_{\text{r},i}\ge {T}_{\text{tgt}}-\delta  \right\}\nonumber  \\
 & =\Pr \left\{ \sum\limits_{i=0}^{Q}{{s}_{\text{r},i}}-{T}_{\text{tgt}}>j,\sum\limits_{i=0}^{Q -1}{{s}_{\text{r},i}}<{T}_{\text{tgt}}-\delta  \right\}+
 \Pr \left\{ \sum\limits_{i=0}^{Q-1}{{s}_{\text{r},i}}-{T}_{\text{tgt}}>j,\sum\limits_{i=0}^{Q-1}{{s}_{\text{r},i}}\ge {T}_{\text{tgt}}-\delta ,\sum\limits_{i=0}^{Q -2}{{s}_{\text{r},i}}<{T}_{\text{tgt}}-\delta  \right\}\nonumber  \\
 & \qquad +\ldots +\Pr \left\{ {s}_{\text{r},0}-{T}_{\text{tgt}}>j,{s}_{\text{r},0}\ge {T}_{\text{tgt}}-\delta  \right\}\nonumber  \\
 & =\Pr \left\{ {{S}_{Q}}-{T}_{\text{tgt}}>j \right\}.
\end{align}
\hrule
\end{figure*}
Thus, by substituting $n_\text{r}=Q={{T}_{\text{tgt}}}-1-\delta$ into $\eqref{eq:Primary formula of repetition strategy}$, we can get
\begin{align}
  & \Pr \left\{ {{S}_{n_\text{r}}}-{{T}_{\text{tgt}}}>j \right\}\nonumber \\
  & =\Pr \left\{ {{S}_{{T}_{\text{tgt}}-\delta -1}}-{T}_{\text{tgt}}>j \right\}\nonumber  \\
  & =\left\{
  \begin{aligned}
  & \sum\limits_{m=0}^{{T}_{\text{tgt}}-\delta -1}C_{{T}_{\text{tgt}}-\delta -1}^{m}{{p}^{m}}{{\left( 1-p \right)}^{j+{T}_{\text{tgt}}-m}},&& j\ge -1-\delta \nonumber  \\
  & 1,&& j<-1-\delta  \\
  \end{aligned} \right. \\
    & =\left\{
   \begin{aligned}
  & {{\left( 1-p \right)}^{j+1+\delta }},&& j\ge -1-\delta  \\
  & 1,&& j<-1-\delta.  \\
\end{aligned} \right.
\end{align}

This completes the proof of Proposition \ref{Probability distribution function of repetition strategy}.
\end{proof}

\subsection{Proof of Proposition \ref{eq:Proof formula of reward function}}
\begin{proof}\label{Proof of corollary 3}
Note that in Proposition \ref{Probability distribution function of repetition strategy}, the system transmits only one packet and the state of the packet is $i={T}_{\text{tgt}}$. Using the results in Proposition \ref{Probability distribution function of repetition strategy} and substituting $y=j+{T}_{\text{tgt}}$ and $i={T}_{\text{tgt}}$ into \eqref{corollary2.1}, \eqref{corollary2.2} and \eqref{corollary2.3}, for the case  $i\ge 1+n_\text{r}+\delta$ we have
\begin{align}\label{eq:Repeated variant 1}
   &\Pr \left\{{{S}_{n_\text{r}}}>y \right\}\nonumber \\
   &=\left\{ {
   \begin{aligned}
      &\sum\limits_{m=0}^{n_\text{r}}C_{y}^{m}{{p}^{m}}{{\left( 1-p \right)}^{y-m}},&& n_\text{r}\le y<i-1-\delta  \\
      &\sum\limits_{m=0}^{n_\text{r}}C_{i-\delta -1}^{m}{{p}^{m}}{{\left( 1-p \right)}^{y-m}},&& y\ge i-1-\delta  \\
      &1,&& y<n_\text{r}; \\
  \end{aligned}} \right.
\end{align}
when $i \le 1+\delta$, we have
\begin{align}\label{eq:Repeated variant 2}
  \Pr \left\{ {{S}_{n_\text{r}}}>y \right\}& =\Pr \left\{ {{S}_{0}}>y \right\}=\left\{
  \begin{aligned}
  & {{\left( 1-p \right)}^{y}},&& y\ge0 \\
  & 1,&& y<0; \\
\end{aligned} \right.
\end{align}
when $1+\delta< i<1+n_\text{r}+\delta$, we have
\begin{align}\label{eq:Repeated variant 3}
  \Pr \left\{ {{S}_{n_\text{r}}}>y \right\}& =\Pr \left\{ {{S}_{i-1-\delta }}>y \right\}\nonumber \\
  &=\left\{
  \begin{aligned}
  & {{\left( 1-p \right)}^{y-i+1+\delta }},&& y\ge i-1-\delta  \\
  & 1,&& y<i-1-\delta.  \\
  \end{aligned} \right.
\end{align}
We denote the transition probability of the packet from state $i$ to state $j$ by adopting the repeat strategy as ${{p}_{ij}}\left( n_\text{r} \right)$. From $\eqref{eq:transfer}$, we have
\begin{align}\label{eq:Repeated solution}
  {{p}_{ij}}\left( n_\text{r} \right)& =\Pr \left\{ i-j+{T}_{\text{tgt}}=y \right\}\nonumber  \\
  & =\Pr \left\{ {{S}_{n_\text{r}}}=y \right\}\nonumber  \\
  & =\Pr \left\{ {{S}_{n_\text{r}}}>y-1 \right\}-\Pr \left\{ {{S}_{n_\text{r}}}>y \right\}.
\end{align}
\subsubsection{$i \le 1+\delta$}
When $i \le 1+\delta$, from $\eqref{eq:Repeated variant 2}$ and $\eqref{eq:Repeated solution}$, we have
\begin{align}
  &{{p}_{ij}}\left( n_\text{r} \right) \nonumber  \\
  &=\Pr \left\{ i-j+{T}_{\text{tgt}}=y \right\} \nonumber  \\
  & =\Pr \left\{ {{S}_{n_\text{r}}}=y \right\} \nonumber  \\
  & =\Pr \left\{ {{S}_{n_\text{r}}}>y-1 \right\}-\Pr \left\{ {{S}_{n_\text{r}}}>y \right\} \nonumber  \\
  & =\Pr \left\{ {{S}_{n_\text{r}}}>i-j+{T}_{\text{tgt}}-1 \right\}-\Pr \left\{ {{S}_{n_\text{r}}}>i-j+{T}_{\text{tgt}} \right\} \nonumber  \\
  & =\left\{
  \begin{aligned}
  & {{\left( 1-p \right)}^{i-j+{T}_{\text{tgt}}-1}}-{{\left( 1-p \right)}^{i-j+{T}_{\text{tgt}}}},&& i-j+{T}_{\text{tgt}}>0 \nonumber \\
  & 1-1,&& i-j+{T}_{\text{tgt}}\le 0 \\
  \end{aligned} \right.\\
  & =\left\{
  \begin{aligned}
  & p{{\left( 1-p \right)}^{i-j+{T}_{\text{tgt}}-1}},&& i>j-{T}_{\text{tgt}} \\
  & 0,&& i\le j-{T}_{\text{tgt}}. \\
  \end{aligned} \right.
\end{align}
\subsubsection{$1+\delta< i<1+n_\text{r}+\delta$}
When $1+\delta< i<1+n_\text{r}+\delta$, from $\eqref{eq:Repeated variant 3}$ and $\eqref{eq:Repeated solution}$, we have
\begin{align}
  & {{p}_{ij}}\left( n_\text{r} \right) \nonumber  \\
  & =\Pr \left\{ i-j+{T}_{\text{tgt}}=y \right\} \nonumber  \\
  & =\Pr \left\{ {{S}_{n_\text{r}}}=y \right\} \nonumber  \\
  & =\Pr \left\{ {{S}_{n_\text{r}}}>y-1 \right\}-\Pr \left\{ {{S}_{n_\text{r}}}>y \right\} \nonumber  \\
  & =\Pr \left\{ {{S}_{n_\text{r}}}>i-j+{T}_{\text{tgt}}-1 \right\}-\Pr \left\{ {{S}_{n_\text{r}}}>i-j+{T}_{\text{tgt}} \right\} \nonumber  \\
  & =\left\{
   \begin{aligned}
  & p{{\left( 1-p \right)}^{{T}_{\text{tgt}}-j+\delta }},&& j<1+\delta +{T}_{\text{tgt}} \\
  & 0,&& j\ge 1+\delta +{T}_{\text{tgt}}. \\
   \end{aligned} \right.
\end{align}
\subsubsection{$i\ge 1+n_\text{r}+\delta$}
In the case of $i\ge 1+n_\text{r}+\delta$, we first consider the state transition probability when $n_\text{r}< y\le i-1-\delta$. From $\eqref{eq:Repeated variant 1}$ and $\eqref{eq:Repeated solution}$, we have
\begin{align}
  & {{p}_{ij}}\left( n_\text{r} \right) \nonumber \\
  & =\Pr \left\{ i-j+{T}_{\text{tgt}}=y \right\} \nonumber  \\
  & =\Pr \left\{ i-j+{T}_{\text{tgt}}>y-1 \right\}-\Pr \left\{ i-j+{T}_{\text{tgt}}>y \right\} \nonumber  \\
  & =\sum\limits_{m=0}^{n_\text{r}}{C_{y-1}^{m}{{p}^{m}}{{\left( 1-p \right)}^{y-1-m}}}-\sum\limits_{m=0}^{n_\text{r}}{C_{y}^{m}{{p}^{m}}{{\left( 1-p \right)}^{y-m}}} \nonumber  \\
  & =\sum\limits_{m=0}^{n_\text{r}}{C_{y-1}^{m}{{p}^{m}}{{\left( 1-p \right)}^{y-1-m}}} \nonumber \\
  &\quad\quad\quad\quad\quad\quad\ -\left( 1-p \right)\sum\limits_{m=0}^{n_\text{r}}{\frac{y}{y-m}C_{y-1}^{m}{{p}^{m}}{{\left( 1-p \right)}^{y-1-m}}} \nonumber  \\
  & =p\sum\limits_{m=0}^{n_\text{r}}{C_{y-1}^{m}{{p}^{m}}{{\left( 1-p \right)}^{y-1-m}}}\nonumber \\
  &\quad\quad\quad\quad\quad\quad\ -\left( 1-p \right)\sum\limits_{m=0}^{n_\text{r}}{\frac{m}{y-m}C_{y}^{m}{{p}^{m}}{{\left( 1-p \right)}^{y-1-m}}} \nonumber  \\
  & =\sum\limits_{m=0}^{n_\text{r}}{C_{y-1}^{m}{{p}^{m+1}}{{\left( 1-p \right)}^{y-1-m}}} \nonumber\\
  &\quad\quad\quad\quad\quad\quad\ -\sum\limits_{m=1}^{n_\text{r}}{C_{y-1}^{m-1}{{p}^{m}}{{\left( 1-p \right)}^{y-m}}} \nonumber  \\
  & =\sum\limits_{m=0}^{n_\text{r}}{C_{y-1}^{m}{{p}^{m+1}}{{\left( 1-p \right)}^{y-1-m}}} \nonumber \\
  &\quad\quad\quad\quad\quad\quad\ -\sum\limits_{m=0}^{n_\text{r}-1}{C_{y-1}^{m}{{p}^{m+1}}{{\left( 1-p \right)}^{y-1-m}}}& \nonumber \\
  & =C_{y-1}^{n_\text{r}}{{p}^{1+n_\text{r}}}{{\left( 1-p \right)}^{y-1-n_\text{r}}}.
\end{align}
We then consider the state transition probability when $y> i-1-\delta$ and have
\begin{align}
  & {{p}_{ij}}\left( n_\text{r} \right) \nonumber \\
  & =\Pr \left\{ i-j+{T}_{\text{tgt}}=y \right\}\nonumber  \\
  & =\Pr \left\{ i-j+{T}_{\text{tgt}}>y-1 \right\}-\Pr \left\{ i-j+{T}_{\text{tgt}}>y \right\}\nonumber  \\
  & =\sum\limits_{m=0}^{n_\text{r}}{C_{i-1-\delta }^{m}{{p}^{m}}{{\left( 1-p \right)}^{y-m-1}}}\nonumber  \\
  &\qquad\qquad\qquad\qquad\qquad\qquad -\sum\limits_{m=0}^{n_\text{r}}{C_{i-1-\delta }^{m}{{p}^{m}}{{\left( 1-p \right)}^{y-m}}}\nonumber  \\
  & =\sum\limits_{m=0}^{n_\text{r}}{C_{i-1-\delta }^{m}{{p}^{1+m}}{{\left( 1-p \right)}^{y-m-1}}}.
\end{align}
Finally, when $y\le k$, we have
\begin{align}
  & {{p}_{ij}}\left( n_\text{r} \right) \nonumber \\
  & =\Pr \left\{ i-j+{T}_{\text{tgt}}=y \right\} \nonumber \\
  & =\Pr \left\{ i-j+{T}_{\text{tgt}}>y-1 \right\}-\Pr \left\{ i-j+{T}_{\text{tgt}}>y \right\} \nonumber  \\
  & =1-1 \nonumber  \\
  & =0,
\end{align}
Thus when $i\ge 1+n_\text{r}+\delta$, we have
\begin{align}
 & {{p}_{ij}}\left( n_\text{r} \right) \nonumber \\
 & =\left\{
 \begin{aligned}
 & 0&& ,y\le n_\text{r} \\
 & C_{y-1}^{n_\text{r}}{{p}^{1+n_\text{r}}}{{\left( 1-p \right)}^{y-1-n_\text{r}}}&& ,n_\text{r}<y\le i-1-\delta \\
 & \sum\limits_{m=0}^{n_\text{r}}{C_{i-1-\delta }^{m}{{p}^{1+m}}{{\left( 1-p \right)}^{^{y-1-m}}}}&& ,y> i-1-\delta,
\end{aligned} \right.
\end{align}
in which $y=i-j+{T}_{\text{tgt}}$.

This completes the proof of Proposition \ref{eq:Proof formula of reward function}.
\end{proof}

\end{document}